\newcommand\longshort[2]{#1}
\definecolor{darkred}{rgb}{0.3,0,0}
\definecolor{darkgreen}{rgb}{0,0.3,0}
\definecolor{darkblue}{rgb}{0,0,0.3}
\newcommand*\<{\mathrel{\blacktriangleleft}}
\renewcommand*\>{\mathrel{\blacktriangleright}}
\newcommand*\cE{{\mathcal E}}
\newcommand*\cL{{\mathcal L}}
\newcommand*\cT{{\mathcal T}}
\newcommand*\cyk{{\it cyk}}
\newcommand*\arity{{\it arity}}
\newcommand*\defeq{\mathrel{:=}}
\newcommand*\df[1]{\emph{#1}}
\newcommand*\inh{\mathrel{<::}}
\newcommand*\NN{\mathbb{N}}
\newcommand*\nullable{{\it nullable}}
\newcommand*\rep[1]{\lfloor#1\rfloor}
\newcommand*\vars{{\it vars}}
\newcommand*\subtype{\mathrel{<:}}
\newcommand*\dL{{\sf L}}
\newcommand*\dR{{\sf R}}
\newcommand*\dS{{\sf S}}
\DeclareMathOperator\dom{dom}
\declaretheoremstyle[notebraces={}{},bodyfont=\itshape]{nobraces}
\declaretheorem[style=nobraces]{theorem}
\declaretheorem[sibling=theorem,style=definition,qed=\qedsymbol]{example}
\declaretheorem[sibling=theorem,style=plain]{proposition}
\declaretheorem[sibling=theorem,style=remark,qed=\qedsymbol]{remark}
\begin{document}
\toappear{}





\title{Java Generics are Turing Complete}

\authorinfo{\longshort{Radu Grigore}{}}
           {\longshort{University of Kent, United Kingdom}{}}
           {}

\maketitle

\begin{abstract} 
This paper describes a reduction
  from the halting problem of Turing machines to subtype checking in Java.
It follows that subtype checking in Java is undecidable,
  which answers a question posed by Kennedy and Pierce in 2007.
It also follows that Java's type checker can recognize any recursive language,
  which improves a result of Gil and Levy from~2016.
The latter point is illustrated by a parser generator for fluent interfaces.
\end{abstract}

\category{D.3.3}{Language Constructs and Features}{}

\keywords
Java, subtype checking, decidability, fluent interface, parser generator,
  Turing machine

\section{Introduction} 

Is Java type checking decidable?
This is an interesting theoretical question,
  but it is also of interest to compiler developers
  \cite{kotlin-call}.
Since Java's type system is cumbersome for formal reasoning,
  several approximating type systems have been studied.
Two of these type systems are known to be undecidable:
  \cite{kennedy2007} and \cite{wehr2009}.
Alas, neither of these is a subset of Java's type system:
  the reduction of \cite{kennedy2007} uses multiple instantiation inheritance,
    which is forbidden in Java;
  the reduction of \cite{wehr2009} uses bounded existential types
    in a way that goes beyond what Java wildcards can express.
So, neither result carries over to Java.

Not knowing a decidability proof was enough to spur interest into decidable fragments.
In \cite{kennedy2007},
  certain `recursive--expansive' patterns in the inheritance are forbidden;
  this solution was adopted in Scala,
    whose typechecking remains undecidable for other reasons.
In \cite{taming-wildcards},
  some restrictions on the use of `\hbox{\tt ? super}' are imposed.
In \cite{f-bounded},
  one distinguishes between classes and `shapes'.
Here, we shall prove that Java's type system is indeed undecidable,
  which justifies further these existing studies.

In a separate line of work,
  \cite{det-cfg-java} show that Java's type system
    can recognize deterministic context free languages,
  thus giving the first nontrivial lower bound directly applicable
    to Java's type system.
Our main result immediately implies that
  Java's type system can recognize any recursive language.
In addition, we shall show that
  Java's type system can recognize context free language in polynomial time.
More precisely, the two results are the following:

\begin{restatable}{theorem}{thmmain}
  \label{thm:main}
It is undecidable whether $t \subtype t'$ according to a given class table.
\end{restatable}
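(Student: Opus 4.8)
The plan is to reduce the halting problem to subtype checking. Given a deterministic Turing machine $M$, I will construct --- uniformly and effectively --- a class table $\cT_M$ together with two types $t, t'$ such that $t \subtype t'$ holds precisely when $M$ halts on the empty input; undecidability of $\subtype$ then follows from undecidability of halting.

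First I would fix what is being simulated. A configuration of $M$ is a triple (current state $q$, finite nonblank tape to the left of the head, finite nonblank tape under and to the right of the head). I encode each tape half as a right-nested type built from unary constructors $\dL_a$ for left-of-head content and $\dR_a$ for right-of-head content --- one of each per tape symbol $a$ --- bottoming out in a distinguished type $\dS$; for instance $a_1 a_2 \cdots a_n$ sitting to the left of the head, with $a_n$ nearest the head, becomes $\dL_{a_n}\langle\cdots \dL_{a_2}\langle \dL_{a_1}\langle \dS\rangle\rangle \cdots\rangle$. A whole configuration is then a type $C_q\langle \ell, r\rangle$, where $C_q$ is a binary generic class, one per state.

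Second, each transition of $M$ becomes inheritance in $\cT_M$. A right-moving transition ``in state $q$ read $a$, write $b$, move right, enter $q'$'' is realized by one generic declaration making $C_q\langle X, \dR_a\langle Y\rangle\rangle$ a subtype of $C_{q'}\langle \dL_b\langle X\rangle, Y\rangle$: the head symbol $a$ is popped off the right half and $b$ is pushed onto the left half. The symmetric left move is the delicate case, since it must pop a layer off the \emph{left} half, which is being built up covariantly; this is where contravariant wildcards enter, because a query of the shape $D\langle P\rangle \subtype D\langle\texttt{? super } Q\rangle$ reduces to $Q \subtype P$ and so reverses the orientation of a nesting, letting the outermost $\dL_b$ be matched and discarded. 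I would package this reversal into a small fixed ladder of auxiliary classes, so that one left move of $M$ costs a bounded number of subtyping steps. Finally, each halting state $q_{\rm halt}$ gets a declaration making every $C_{q_{\rm halt}}\langle\cdot,\cdot\rangle$ a subtype of a fresh type $t'$, and I take $t \defeq C_{q_0}\langle \dS, \dS\rangle$.

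Correctness has the usual two directions. Soundness --- if $M$ halts then $t \subtype t'$ --- is an easy induction on the length of the halting run: each step supplies the inheritance edge (or bounded ladder of edges) rewriting one configuration type into its successor, and the terminal configuration connects to $t'$. Completeness --- if $t \subtype t'$ then $M$ halts --- is where the work is, and I expect it to be the main obstacle: one must show the subtype algorithm's search can succeed only by faithfully tracing $M$'s deterministic run. This amounts to proving $\cT_M$ is ``rigid'': at a configuration type $C_q\langle \ell, r\rangle$ the only applicable step is the one for $M$'s unique transition on $(q, \text{head symbol of } r)$; the wildcard-peeling steps are forced; and neither reflexivity/transitivity nor the variance rules on unrelated constructors introduce spurious subtype relations. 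Carrying out this determinism analysis against the exact Java subtyping rules --- and in particular while respecting Java's single-instantiation-inheritance restriction, which blocks the shortcut available to \cite{kennedy2007} --- is the crux, and I expect it to consume most of the proof.
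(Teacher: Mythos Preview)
Your high-level plan (reduce from halting, encode configurations as types, encode transitions via inheritance, use contravariance to ``peel'' layers) is right, and you correctly flag single-instantiation inheritance as the constraint that blocks the Kennedy--Pierce shortcut. But the concrete encoding you sketch does not work, and the difficulty is not where you put it.

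The gap is in your ``easy'' case. You write that a right move is ``realized by one generic declaration making $C_q\langle X,\dR_a\langle Y\rangle\rangle$ a subtype of $C_{q'}\langle \dL_b\langle X\rangle, Y\rangle$.'' There is no such declaration: in the formalism (and in Java), an inheritance rule has the form $C\,x_1\ldots x_m \inh D\,t_1\ldots t_n$ with \emph{bare variables} on the left. You cannot pattern-match on the shape of the second argument; there is no way for a declaration of $C_q$ to see that $Y$ begins with $\dR_a$ and to strip that layer to produce the $Y'$ you need on the right. So the dispatch on the head symbol $a$ --- which is the whole content of simulating a transition --- has nowhere to happen in your encoding. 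The left-move case is not the delicate one; \emph{every} move is delicate, because every move must read a symbol, and reading requires inspecting structure that inheritance declarations cannot inspect.

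The paper's solution is structurally quite different from yours. It abandons arity~$2$ entirely: every class has arity~$\le 1$, and a configuration is a single linear string $C_1C_2\ldots C_mZ \subtype D_1D_2\ldots D_nZ$. The outermost class on the left (say $Q_s^{\sf L}$) \emph{is} the state; the outermost class on the right (say $L_a$) \emph{is} the symbol being read; and the declaration $Q_s^{\sf L}x \inh L_a\,N\,\ldots\,x$ dispatches on $a$ by choosing which superclass head to match. Contravariance then flips the direction of the residual query, so the ``head'' of the subtyping machine sweeps back and forth across the tape; one Turing step costs a full traversal. The single-instantiation constraint is satisfied because for each $(s,a)$ the class $Q_s^{\sf L}$ has exactly one superinterface with head $L_a$, and the inheritance graph has unique walks between any two nodes. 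Your binary-constructor encoding, even if patched to dispatch correctly, would likely need $C_q$ to inherit from several instantiations of the same head class (one per symbol), which is exactly what Java forbids.
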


\begin{restatable}{theorem}{thmfluent}\label{thm:fluent}
Given is a context free grammar~$G$
  that describes a language $\cL\subseteq\Sigma^*$
  over an alphabet~$\Sigma$ of method names.
We can construct
  Java class definitions,
  a type~$T$,
  and expressions ${\it Start}$, ${\it Stop}$
such that the code
\[ T\; \ell \quad=\quad
  {\it Start}.f^{(1)}().f^{(2)}()\ldots f^{(m)}().{\it Stop} \]
type checks if and only if $f^{(1)}f^{(2)}\ldots f^{(m)} \in \cL$.
\textcolor{darkgreen}{%
Moreover,
  the class definitions have size polynomial in the size of~$G$,
  and the Java code can be type-checked in time polynomial in the size of~$G$.
}
\end{restatable}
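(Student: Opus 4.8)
The plan is to make Java's type checker run the CYK recogniser incrementally, consuming one input symbol per method call. First I would normalise $G$, in time polynomial in its size, to a grammar with only binary and unit productions and no nullable nonterminals, recording separately whether $\varepsilon\in\cL$ (this last bit gets hard-wired into the target type~$T$). For the normalised grammar, define the \emph{chart after $i$ symbols} to be the upper-triangular family $\rho_i=(T_{j,k})_{0\le j<k\le i}$ with $T_{j,k}=\{A : A\Rightarrow^* w_{j+1}\cdots w_k\}$. Appending a symbol $w_{i+1}$ only adds the new column $T_{i,i+1},T_{i-1,i+1},\dots,T_{0,i+1}$, computed in that order from cells already present by the CYK recurrence. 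Each column-append uses $O(i^2|G|)$ elementary set operations, so a run uses $O(m^3|G|)$ of them: that is the budget I must respect.

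Next I fix the type encoding. Nonterminal sets, cells, columns and whole charts become nested generic types over a class table of size polynomial in~$|G|$, with one dedicated class per terminal, per nonterminal and per production, so that the checker can branch on the head of such a type (distinct classes carry distinct \texttt{extends} clauses --- this is the only case analysis available, and it is enough). The expression ${\it Start}$ builds an object of type $\mathrm{Chart}\langle\rho_0\rangle$ holding the empty chart. Inside class $\mathrm{Chart}\langle X\rangle$ I declare, for every $f\in\Sigma$, a method~$f$ with return type $\mathrm{Work}_f\langle X\rangle$ (the classes $\mathrm{Work}_f$ being distinct per terminal, so the symbol under consideration is known), and a member ${\it Stop}$. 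Then ${\it Start}.f^{(1)}()\cdots f^{(i)}()$ has type $\mathrm{Work}_{f^{(i)}}\langle\rho_{i-1}\rangle$, produced in $O(1)$ beyond substitution; the real work of appending the column for $f^{(i)}$ is deferred to the moment the \emph{next} member ($f^{(i+1)}$ or ${\it Stop}$) is resolved, because $\mathrm{Work}_f\langle X\rangle$ declares none of those members, forcing the checker to walk its supertype chain.

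The crux --- and the step I expect to be the main obstacle --- is engineering that supertype walk to perform exactly one column-append and then land on a $\mathrm{Chart}\langle\rho_i\rangle$ where the member is found. A single type-constructor application cannot loop over the $\Theta(i)$ target cells and $\Theta(i)$ split points, so the loop must be driven by the subtype checker, just as nonterminating subtype queries are engineered in \cite{kennedy2007} and, I expect, in the proof of Theorem~\ref{thm:main} --- only here it is bounded. Concretely, the chain $\mathrm{Work}_f\langle\cdots\rangle\subtype\mathrm{Work}_f\langle\cdots\rangle\subtype\cdots$ threads a configuration through type arguments (the partial chart, a program counter, an accumulator cell, the current split point, and a unary progress counter of polynomial size), each link running one step of a deterministic finite-control program that dispatches on the heads of the relevant arguments: membership tests by rotating a bit-list over the nonterminals, advancing the split point, and scanning the $O(|G|)$ productions. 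On termination the chain reaches $\mathrm{Chart}\langle\rho_i\rangle$; for ${\it Stop}$, one further $O(1)$ dispatch on the head of $T_{0,m}$ produces a type that is a subtype of~$T$ iff $S\in T_{0,m}$. The delicate points are: keeping every intermediate type polynomially bounded (which holds, since $|\rho_i|=O(i^2|G|)$), so no link is expensive; making the transition genuinely deterministic, so the checker never unfolds a superpolynomial search tree --- precisely the behaviour Theorem~\ref{thm:main} exhibits in general and that must be kept tame here; ensuring the bounds on generic parameters are trivial so that no type is normalised eagerly; and staying strictly within Java --- single-instantiation inheritance, legal wildcard use --- rather than the richer systems of \cite{kennedy2007} and \cite{wehr2009}.

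Finally I would total the cost. The class table is fixed, of size $\mathrm{poly}(|G|)$. Type-checking the given code triggers $m$ column-appends, the $i$-th costing $O(i^2|G|)$ checker steps on types of size $O(i^2|G|)$; summing over $i$ gives a bound polynomial in~$|G|$ and $m$. A rejected input costs no more: the chart is still built in polynomial time, after which the final subtyping query fails within a constant-size supertype closure. This yields a construction meeting both halves of the statement, and sharpens the qualitative ``recognises every recursive language'' corollary of Theorem~\ref{thm:main} into a polynomial-time recogniser for the context-free case.
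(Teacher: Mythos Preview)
Your plan diverges from the paper's in a basic architectural way, and the divergence is where the gap lies. The paper does \emph{not} try to compute the CYK chart as a type. Its method chain only assembles the input word: each call $f()$ prepends $L_fN$ to a growing type, and ${\it Stop}$ caps it with $Q_{\sf I}^{\sf wR}L_\#N\cdots$ to produce a type~$S$ encoding the Turing tape. All the computation then happens in the single query $S\subtype T$, which---via the contravariant subtyping rule---recurses enough times to simulate the Turing machine running CYK on that tape. The indirection through Simper and Turing machines exists precisely so that this final query can reuse the mechanism of Theorem~\ref{thm:main}.

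Your mechanism, by contrast, relies on member lookup walking the supertype chain of $\mathrm{Work}_f\langle X\rangle$ to perform the column append before landing on $\mathrm{Chart}\langle\rho_i\rangle$. That walk cannot do the work you need. First, its length is bounded by the depth of the inheritance hierarchy, which is fixed by the class table and hence polynomial in~$|G|$ alone; it cannot take the $\Theta(i^2|G|)$ steps the $i$-th column requires. Writing the chain as $\mathrm{Work}_f\langle\cdots\rangle\subtype\mathrm{Work}_f\langle\cdots\rangle\subtype\cdots$ hides that this would be cyclic inheritance (forbidden), or else require input-dependently many distinct classes. Second, and more fundamentally, inheritance rules are uniform in their type variables: the rule $C\langle x\rangle\inh D\langle\ldots x\ldots\rangle$ applies identically whatever $x$ is, so member lookup cannot ``dispatch on the heads of the relevant arguments'' as you propose. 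Dispatch on type-argument structure is exactly what the contravariant subtyping recursion provides, but a subtype query returns only a bit, not a computed type---so it cannot hand you $\rho_i$ to thread into the next call. This is why the paper pushes everything into one terminal query rather than interleaving computation with the chain.
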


\autoref{thm:main}
  is proved by a reduction from the halting problem of Turing machines
  to subtype checking in Java (\autoref{sec:result}).
The proof is preceded
  by an informal introduction to Java wildcards (\autoref{sec:wildcards})
  and by some formal preliminaries
  (Sections \ref{sec:prelim}~and~\ref{sec:subtype-machine}).
It is followed by \autoref{thm:fluent},
  which is an application to generating parsers for fluent interfaces
  (\autoref{sec:fluent}).
The parser generator makes use of a compiler
  from a simple imperative language into Java types;
  this compiler is described next (\autoref{sec:compiler}).
Before we conclude,
  we reflect on the implications of
  Theorems \ref{thm:main}~and~\ref{thm:fluent} (\autoref{sec:discussion}).

\section{Background: Java Wildcards}\label{sec:wildcards} 

This section introduces Java generics, wildcards, and their bounds, by example.
The presentation is necessarily incomplete.
For the definitive reference, see \cite{java-ls}.

Java generics are used, for example, to implement lists.
When implementing the list, its elements are given a generic type.
When using the list, the type of its elements is fixed.
The interaction between generics and subtyping is not trivial.
${\tt List}\langle{\tt Number}\rangle$
  is not a subtype of ${\tt List}\langle{\tt Integer}\rangle$:
  if it were then we could extract integers from a list of numbers;
nor is ${\tt List}\langle{\tt Integer}\rangle$
  is a subtype of ${\tt List}\langle{\tt Number}\rangle$:
  if it were then we could insert numbers in a list of integers.

Intuitively, there is a difference between extracting and inserting data.
For example, suppose we want to implement a function ${\tt firstNum}$
  that extracts the first number in a list.
The implementation should work for both
  ${\tt List}\langle{\tt Integer}\rangle$
  and
  ${\tt List}\langle{\tt Number}\rangle$.
We could use bounded generics
{\small\begin{Verbatim}[commandchars=\\\{\}]
\PY{o}{\PYZlt{}}\PY{n}{x} \PY{k+kd}{extends} \PY{n}{Number}\PY{o}{\PYZgt{}} \PY{n}{Number} \PY{n}{firstNum}\PY{o}{(}\PY{n}{List}\PY{o}{\PYZlt{}}\PY{n}{x}\PY{o}{\PYZgt{}} \PY{n}{xs}\PY{o}{)}
\PY{o}{\PYZob{}} \PY{k}{return} \PY{n}{xs}\PY{o}{.}\PY{n+na}{get}\PY{o}{(}\PY{l+m+mi}{0}\PY{o}{)}\PY{o}{;} \PY{o}{\PYZcb{}}
\end{Verbatim}
}\noindent
or bounded wildcards
{\small\begin{Verbatim}[commandchars=\\\{\}]
\PY{n}{Number} \PY{n+nf}{firstNum}\PY{o}{(}\PY{n}{List}\PY{o}{\PYZlt{}}\PY{o}{?} \PY{k+kd}{extends} \PY{n}{Number}\PY{o}{\PYZgt{}} \PY{n}{xs}\PY{o}{)}
\PY{o}{\PYZob{}} \PY{k}{return} \PY{n}{xs}\PY{o}{.}\PY{n+na}{get}\PY{o}{(}\PY{l+m+mi}{0}\PY{o}{)}\PY{o}{;} \PY{o}{\PYZcb{}}
\end{Verbatim}
}\noindent
Both variants \emph{bound} the elements to be subtypes of ${\tt Number}$.
Now let us consider the converse situation,
  in which we want to insert an integer into the list.
In this case, Java lets us use only the variant with wildcards:
{\small\begin{Verbatim}[commandchars=\\\{\}]
\PY{k+kt}{void} \PY{n+nf}{addOne}\PY{o}{(}\PY{n}{List}\PY{o}{\PYZlt{}}\PY{o}{?} \PY{k+kd}{super} \PY{n}{Integer}\PY{o}{\PYZgt{}} \PY{n}{xs}\PY{o}{)}
\PY{o}{\PYZob{}} \PY{n}{xs}\PY{o}{.}\PY{n+na}{add}\PY{o}{(}\PY{l+m+mi}{1}\PY{o}{)}\PY{o}{;} \PY{o}{\PYZcb{}}
\end{Verbatim}
}\noindent
As with $\tt firstNum$, the method $\tt addOne$ can be used with both of
  ${\tt List}\langle{\tt Integer}\rangle$
  and
  ${\tt List}\langle{\tt Number}\rangle$.

Now let us consider the call to ${\tt firstNum}$,
  for a list of integers.
For the bounded generics version, the call is
\[
  {\tt this}.\langle{\tt Integer}\rangle{\tt firstNum}({\tt xs})
  \quad\text{or}\quad
  {\tt firstNum}({\tt xs})
\]
The latter is a simplified version,
  made possible by type inference and syntactic sugar.
So, let us focus on the former.
In it, we set the type variable $\tt x$ to ${\tt Integer}$,
  thus selecting which version of the generic method ${\tt firstNum}$ we are using.
On the other hand, for the bounded wildcard version, the call is
\[
  {\tt this}.{\tt firstNum}({\tt xs})
  \quad\text{or}\quad
  {\tt firstNum}({\tt xs})
\]
This code looks similar to the one for bounded generics,
  but it type checks for a different reason:
  ${\tt List}\langle{\tt Integer}\rangle$ is considered a subtype of
  ${\tt List}\langle\texttt{? extends Number}\rangle$,
  because ${\tt Integer}$ is a subtype of ${\tt Number}$.

Also, if $\tt B$~is a subtype of~$\tt A$, then
  ${\tt List}\langle\texttt{? extends B}\rangle$ is a subtype of
    ${\tt List}\langle\texttt{? extends A}\rangle$, and
  ${\tt List}\langle\texttt{? super A}\rangle$ is a subtype of
    ${\tt List}\langle\texttt{? super B}\rangle$.
Finally, ${\tt List}\langle\texttt{?}\rangle$ is a supertype of all lists.

Bounded wildcards are used liberally in the implementation of Java's standard library.
For a more interesting example, consider the following method from
  ${\sf java}.{\sf util}.{\sf Collections}$:
{\small\begin{Verbatim}[commandchars=\\\{\}]
\PY{k+kd}{static} \PY{o}{\PYZlt{}}\PY{n}{T}\PY{o}{\PYZgt{}} \PY{k+kt}{int} \PY{n}{binarySearch}\PY{o}{(}
  \PY{n}{List}\PY{o}{\PYZlt{}}\PY{o}{?} \PY{k+kd}{extends} \PY{n}{Comparable}\PY{o}{\PYZlt{}}\PY{o}{?} \PY{k+kd}{super} \PY{n}{T}\PY{o}{\PYZgt{}}\PY{o}{\PYZgt{}} \PY{n}{list}\PY{o}{,}
  \PY{n}{T} \PY{n}{key}\PY{o}{)} \PY{o}{\PYZob{}} \PY{c+cm}{/* ...*/} \PY{o}{\PYZcb{}}
\end{Verbatim}
}\noindent
`To search for a key of type $\tt T$ in a list,
  we must have a list whose elements are comparable to~$\tt T$.'
To express this constraint,
  we need both `{\tt ? extends}' and `{\tt ? super}'

Java's mechanism for deciding whether
  ${\tt List}\langle\alpha\rangle$
  is a subtype of
  ${\tt List}\langle\beta\rangle$
  is known as \df{use-site variance}
  because it involves getting a hint from inside $\alpha$~and~$\beta$:
  Do they mention `{\tt ? super}' or `{\tt ? extends}'?
The alternative, \df{declaration-site variance},
  is to take the hint from the declaration of $\tt List$,
  where the generic type can be annotated as being
    covariant, invariant, or contravariant.
We can simulate declaration site variance by use site variance as follows:
  if the declaration of~$\tt L$ annotates type~$\tt T$ to be covariant,
  then
    we erase the annotation and
    replace all uses
      ${\tt L}\langle\ldots,\texttt{T},\ldots\rangle$
      by ${\tt L}\langle\ldots,\texttt{? extends T},\ldots\rangle$;
  for contravariant annotations,
    we proceed similarly, but using `{\tt ? super}' instead of `{\tt ? extends}'.
In what follows, we only need the contravariant case,
  so it is the only one we formalize.

\section{Preliminaries}\label{sec:prelim} 

We formalize a fragment of Java's type system, following \cite{kennedy2007}.
\df{Types} are defined inductively:
  if $x$~is a type variable, then $x$~is a type;
  if $C$~is a class of arity $m\ge0$ and $t_1,\ldots,t_m$ are types,
    then $Ct_1\ldots t_m$ is a type.
Given a class~$C$, we write $\arity(C)$ for its arity;
given a type~$t$, we write $\vars(t)$ for the type variables occurring in~$t$.
A \df{substitution} $\sigma$ is a mapping
  from a finite set $\dom(\sigma)$ of type variables to types;
  we write $t\sigma$ for the result of applying $\sigma$ to~$t$.
A \df{class table} is a set of \df{inheritance rules} of the form
\begin{align*}
  Cx_1\ldots x_m \inh D t_1 \ldots t_n
\end{align*}
such that $\vars(t_j)\in\{x_1,\ldots,x_m\}$ for $j\in\{1,\ldots,n\}$,
  where $m=\arity(C)$ and $n=\arity(D)$.
A class table defines a binary relation $\inh$ between types:
  if $t_L \inh t_R$ is an inheritance rule,
  then $t_L\sigma \inh t_R\sigma$ for all substitutions~$\sigma$.
Note that we slightly abuse the notation~$\inh$,
  using it both as a binary relation on types
  and as a syntactic separator in inheritance rules.
When $\inh$~is used as a relation,
  we denote by $\inh^*$ its reflexive transitive closure.
(Note also that dropping the symbols `{\tt <,>}' that Java uses is unambiguous,
  as long as arities are known.)

Java forbids \df{multiple instantiation inheritance}:
  that is, in Java,
  if $t \inh^* C t_1\ldots t_m$ and $t \inh^* C t'_1 \ldots t'_m $,
  then $t_i=t'_i$ for $1 \le i \le m$.
In particular,
  if $Cx_1\ldots x_m \inh^* C t_1\ldots t_m$,
  then $x_i=t_i$ for $1\le i\le m$;
  that is, the class table is \df{acyclic}.

The subtyping relation $\subtype$ is defined inductively by the rule
\begin{align}
\dfrac
  {A t_1 \ldots t_m \inh^* C t'_1\ldots t'_n
    \qquad t''_1 \subtype t'_1 \quad\ldots\quad t''_n \subtype t'_n}
  {A t_1 \ldots t_m \subtype C t''_1 \ldots t''_n}
\label{proof-rule}
\end{align}
If $t \subtype t'$,
  we say that $t$~is a \df{subtype} of~$t'$,
  and $t'$~is a \df{supertype} of~$t$.
In rule~\eqref{proof-rule},
  types $t''_*$ occur in the supertype of the goal but in the subtype of assumptions.
In other words, we consider only the \df{contravariant} case.
It is known that Java's wildcards can encode declaration-site contravariance,
  and that subtyping is decidable in the absence of contravariance
  \cite{kennedy2007}.

\autoref{thm:main} is proved by a reduction
  from the halting problem of Turing machines.
A \df{Turing machine}~$\cT$
  is a tuple $(Q,q_{\sf I},q_{\sf H}, \Sigma,\delta)$,
  where $Q$~is a finite set of \df{states},
  $q_{\sf I}$~is the \df{initial state},
  $q_{\sf H}$~is the \df{halt state},
  $\Sigma$~is a finite \df{alphabet},
  and $\delta : Q\times\Sigma_{\bot} \to Q\times\Sigma\times\{\dL,\dS,\dR\}$
    is a \df{transition function}.
We require $\delta(q_{\sf H},a)=(q_{\sf H},a,\dS)$ for all~$a\in\Sigma$,
  and $\delta(q_{\sf H},\bot)=(q_{\sf H},a,\dS)$ for some~$a\in\Sigma$.
A \df{configuration} is a tuple $(q,\alpha,b,\gamma)$ of
  the \df{current state}~$q\in Q$,
  the \df{left part of the tape} $\alpha\in\Sigma^*$,
  the \df{current symbol} $b\in\Sigma_\bot$,
  and the \df{right part of the tape} $\gamma\in\Sigma^*$.
The \df{execution steps} of~$\cT$ are the following:
\begin{align*}
  (q,\alpha a,b,\gamma) &\to (q',\alpha,a,b'\gamma)
    &&\text{for $\delta(q,b)=(q',b',\dL)$}
\\
  (q,\alpha,b,\gamma) &\to (q',\alpha,b',\gamma)
    &&\text{for $\delta(q,b)=(q',b',\dS)$}
\\
  (q,\alpha,b,c\gamma) &\to (q',\alpha b',c,\gamma)
    &&\text{for $\delta(q,b)=(q',b',\dR)$}
\end{align*}
We also allow for execution steps that go outside the existing tape:
\begin{align*}
  (q,\epsilon,b,\gamma) &\to (q',\epsilon,\bot,b'\gamma)
    &&\text{for $\delta(q,b)=(q',b',\dL)$}
\\
  (q,\alpha,b,\epsilon) &\to (q',\alpha b',\bot,\epsilon)
    &&\text{for $\delta(q,b)=(q',b',\dR)$}
\end{align*}
(Here and throughout, $\epsilon$ stands for the empty string.)
A \df{run on input tape} $\alpha_{\sf I}$ is a sequence of execution steps
  starting from configuration $(q_{\sf I},\epsilon,\bot,\alpha_{\sf I})$.
If $\cT$~reaches $q_{\sf H}$ we say that $\cT$~\df{halts} on~$\alpha_{\sf I}$.

\begin{theorem}[\cite{turing}]
It is undecidable whether a Turing machine~$\cT$ halts on input~$\alpha_{\sf I}$.
\end{theorem}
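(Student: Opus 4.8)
The plan is to prove undecidability by diagonalization, following Turing. First I would fix a computable encoding $\langle\cdot\rangle$ sending each pair $(\cT,\alpha_{\sf I})$ — and, more generally, each Turing machine $M$ on its own — to a string over $\Sigma$ (or over any fixed finite alphabet; the choice is immaterial up to a further computable re-encoding), arranging things so that \emph{every} string is the code of some machine, with junk strings decoded to a fixed trivial machine. I would also note that, under the paper's conventions, ``$\cT$ halts on $\alpha_{\sf I}$'' means that the run starting from $(q_{\sf I},\epsilon,\bot,\alpha_{\sf I})$ eventually enters a configuration whose state is $q_{\sf H}$; since the state component is directly inspectable, a simulator can recognize this event.

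Next, suppose for contradiction that the problem is decidable, i.e.\ that there is a Turing machine $H$ which, given $\langle\cT,\alpha_{\sf I}\rangle$, always reaches $q_{\sf H}$ and signals — say by the symbol it leaves under the head — whether $\cT$ halts on $\alpha_{\sf I}$. From $H$ I would build a machine $D$ that, on input $w$, decodes $w$ as $\langle M\rangle$, runs (a copy of) $H$ on the input $\langle M,w\rangle$, and then: if $H$ reports ``halts'', $D$ enters a state $q_\infty$ with $\delta(q_\infty,\cdot)=(q_\infty,\cdot,\dS)$, so that it never reaches $q_{\sf H}$; otherwise $D$ transitions into $q_{\sf H}$. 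All of decoding, simulation, and this final ``loop-or-halt'' gadget are effectively computable given $H$, so $D$ is a bona fide Turing machine in the sense of this paper.

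The contradiction then comes from feeding $D$ its own description: consider the run of $D$ on input $\langle D\rangle$. If that run reaches $q_{\sf H}$, then by construction $H$ reported that $D$ does \emph{not} halt on $\langle D\rangle$ — yet it just did. If instead the run never reaches $q_{\sf H}$, then $H$ must have reported that $D$ \emph{does} halt on $\langle D\rangle$, so $D$ would have stepped into $q_{\sf H}$ — yet it did not. Either way we contradict the assumed correctness of $H$, so no such $H$ exists, and the halting problem is undecidable.

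The one point that needs genuine care is the self-reference in the definition of $D$: the machine must have access to its own encoding $\langle D\rangle$ in order to run $H$ on $\langle D,\langle D\rangle\rangle$ when the input is $\langle D\rangle$. The cleanest route is to invoke Kleene's recursion theorem — or, concretely, a standard quine-style construction that has $D$ write a copy of its own transition table onto a reserved region of the tape before invoking $H$. Bundling this with the bookkeeping needed to run $H$ as a subroutine on the correct input and to branch on its output is the only fiddly part; the encoding, the simulation, and the case analysis above are routine. I would close by remarking that ``decidable'' here is read through the Church--Turing thesis, which is precisely the reading needed when, later, this problem is reduced to Java subtype checking.
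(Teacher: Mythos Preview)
The paper does not prove this theorem at all: it is quoted as a classical result, with a citation to Turing, and used as the source problem for the reduction in \autoref{sec:result}. So there is no ``paper's own proof'' to compare against; any correct argument would be acceptable here.

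Your argument is the standard diagonalization and is essentially correct, but your final paragraph introduces a confusion that is not actually present in your construction. You define $D$ so that, on input $w$, it decodes $w$ as some machine $M$ and runs $H$ on $\langle M,w\rangle$. This definition makes no reference to $D$ itself: it is a perfectly explicit machine built from $H$ and the decoding routine. When you then \emph{choose} the particular input $w=\langle D\rangle$, the machine $D$ mechanically decodes $w$, obtains (what happens to be) $D$, and feeds $\langle D,\langle D\rangle\rangle$ to $H$; at no point does $D$ need to ``know'' that the decoded machine is itself. So there is no self-reference in the construction, and neither Kleene's recursion theorem nor a quine is required. Invoking them is harmless but misleading: it suggests the proof is more delicate than it is, and a reader might wonder whether you have understood why the plain diagonal argument already closes. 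Drop that paragraph and the proof is clean.
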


\section{Subtyping Machines}\label{sec:subtype-machine} 

The fragment of Java's type system defined in the previous section
  does not seem to share much with Turing machines.
To clarify the connection,
  let us define a third formalism: subtyping machines.
The plan is to see how subtyping machines
  correspond to a fragment of a fragment of Java's type system,
  and at the same time can simulate Turing machines.
So far,
  we saw a fragment of Java's type system
  that included generic classes of arbitrary arity.
Subtyping machines can only handle the case
  in which all classes have arity~$1$,
  apart from one distinguished class $Z$ which has arity~$0$.
The configuration of a subtyping machine is a subtyping query
\[
  C_1 C_2 \ldots C_m Z \subtype D_1 D_2 \ldots D_n Z
\]
For reasons that will become clear later (\autoref{ex:<->}),
  we introduce two alternative notations for the \emph{same} configuration as above:
\begin{align*}
  Z C_m C_{m-1} \ldots C_1 &\< D_1 D_2 \ldots D_n Z \\
=\quad  Z D_n D_{n-1} \ldots D_1 &\> C_1 C_2 \ldots C_m Z
\end{align*}

Since we restrict our attention to arity~$\le1$,
  there are two ways in which proof rule~\eqref{proof-rule} can be applied:
\begin{gather*}
\frac%
  {C_1\ldots C_mZ \inh^* Z}%
  {\color{darkgreen} C_1\ldots C_m Z \subtype Z}
\\[1ex]
\frac%
  {C_1\ldots C_mZ \inh^* D_1E_2\ldots E_pZ
    \quad \color{darkblue} D_2\ldots D_nZ \subtype E_2\ldots E_pZ}%
  {\color{darkgreen} C_1\ldots C_mZ \subtype D_1\ldots D_nZ}
\end{gather*}
Correspondingly,
  we define two types of \df{execution steps} ${\cdot}\leadsto{\cdot}$
  for the subtyping machine:
\begin{align*}
  (\color{darkgreen}Z C_m\ldots C_1 \< Z\color{black})
  \;&\leadsto\; {\bullet}
\end{align*}
if $C_1\ldots C_mZ \inh^* Z$, and
\begin{align*}
  (\color{darkgreen}Z C_m \ldots C_1 \< D_1  \ldots D_n Z\color{black})
\;\leadsto\;
  (\color{darkblue}Z E_p \ldots E_2 \> D_2 \ldots D_n Z\color{black})
\end{align*}
if $C_1\ldots C_mZ \inh^* D_1E_2\ldots E_pZ$.
The special configuration~$\bullet$ is called the \df{halting} configuration.

Recall that $\inh^*$ is the reflexive transitive closure of a relation
  defined by the class table.
Thus, in particular,
  if the class table contains the inheritance rule
\[
  \color{darkgreen}C_1\color{black} x
    \inh \color{darkgreen}D_1 \color{darkblue}E_2 E_3 \ldots E_p \color{black} x
\]
then we can instantiate it with $x \defeq C_2 \ldots C_m Z$
  to enable the following execution step:
\begin{align}
\begin{aligned}
  & Z C_m \ldots C_2 \color{darkgreen} C_1
    \color{black} \< \color{darkgreen}D_1\color{black} D_2 \ldots D_n Z
\\\leadsto\quad
  & Z C_m \ldots C_2 \color{darkblue} E_p \ldots E_2
     \color{black} \> \color{black} D_2 \ldots D_n Z
\end{aligned}
  \label{eq:step1}
\end{align}
Also, because $\inh^*$~is reflexive, the following execution step is enabled:
\begin{align}
\begin{aligned}
  & Z C_m \ldots C_1 \color{darkgreen} N
    \color{black}\< \color{darkgreen} N \color{black} D_1 \ldots D_n Z
\\\leadsto\quad
  & Z C_m \ldots C_1 \> D_1 \ldots D_n Z
\end{aligned}
  \label{eq:step0}
\end{align}

The runs of subtyping machines correspond to (partial) proofs.
Runs that halt correspond to completed proofs.
Runs that get stuck correspond to failed proofs.
The subtyping machine may be nondeterministic,
  which corresponds to situations in which one may apply proof rule~\eqref{proof-rule}
    in several ways;
but,
  if multiple instantiation inheritance is forbidden, as it is in Java,
  then the subtyping machine is deterministic.
There is nothing deep about subtyping machines:
  they are introduced simply because the new notation will make it much easier
  to notice certain patterns.

\begin{proposition}
Consider a subtyping machine described by a given class table.
We have
  \[ C_1 \ldots C_m Z \subtype D_1\ldots D_n Z \]
if and only if there exists a halting run
  \[ (Z C_m\ldots C_1 \< D_1\ldots D_n Z) \;\leadsto^*\; {\bullet}\]
\end{proposition}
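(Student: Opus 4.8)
The plan is to prove the two directions of the biconditional by induction, using the structure of the subtyping derivation on one side and the length of the halting run on the other. The key observation is that the two possible forms of proof rule~\eqref{proof-rule} under arity~$\le1$ — the base case ending in $Z$, and the inductive case peeling off one head — are in exact correspondence with the two kinds of execution step $\leadsto$ defined for the subtyping machine. So the proof is essentially a bookkeeping exercise to verify that this correspondence respects both the premises of the rule and the side conditions involving $\inh^*$.

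For the forward direction, I would argue by induction on the derivation of $C_1\ldots C_mZ \subtype D_1\ldots D_nZ$. The last rule applied is~\eqref{proof-rule} with $A t_1\ldots t_m = C_1\ldots C_m Z$. Since the goal's supertype has some head $D$ of arity $\le1$, there are two cases. If the supertype is exactly $Z$ (arity~$0$), then the rule has the single premise $C_1\ldots C_m Z \inh^* Z$ with no subtyping subgoals; this is precisely the side condition enabling $(Z C_m\ldots C_1 \< Z)\leadsto{\bullet}$, giving a halting run of length one. If the supertype is $D_1\ldots D_nZ$ with $n\ge1$, then $D_1$ has arity~$1$, so the rule reads $C_1\ldots C_mZ \inh^* D_1 E_2\ldots E_pZ$ together with $n$ subtyping premises which, by definition of $\subtype$ on arity-$1$ heads, collapse to the single premise $D_2\ldots D_nZ \subtype E_2\ldots E_pZ$. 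The $\inh^*$ premise is exactly the side condition for the step $(Z C_m\ldots C_1 \< D_1\ldots D_nZ)\leadsto(Z E_p\ldots E_2 \> D_2\ldots D_nZ)$, and the remaining subtyping premise is, after unfolding the notational identity $Z E_p\ldots E_2 \> D_2\ldots D_nZ \;=\; Z D_n\ldots D_2 \< E_2\ldots E_pZ$ recorded in \autoref{sec:subtype-machine}, an instance to which the induction hypothesis applies. Concatenating the step with the run it yields gives the desired halting run.

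For the reverse direction, I would induct on the length of the halting run $(Z C_m\ldots C_1 \< D_1\ldots D_nZ)\leadsto^*{\bullet}$. A run of length one must be a single step to $\bullet$, which by definition requires $C_1\ldots C_mZ\inh^* Z$; then $n=0$ and rule~\eqref{proof-rule} applies with this as its only premise, yielding $C_1\ldots C_mZ\subtype Z$. For a longer run, the first step is of the second kind, so $n\ge1$, the side condition gives $C_1\ldots C_mZ \inh^* D_1E_2\ldots E_pZ$ for appropriate $E_2,\ldots,E_p$, and the rest of the run is a shorter halting run from the resulting configuration, which by the notational identity is $(Z D_n\ldots D_2 \< E_2\ldots E_pZ)$. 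The induction hypothesis gives $D_2\ldots D_nZ\subtype E_2\ldots E_pZ$, and then a single application of~\eqref{proof-rule} — using the $\inh^*$ fact and expanding the one arity-$1$ subtyping premise into the required list of $n$ premises — derives $C_1\ldots C_mZ\subtype D_1\ldots D_nZ$.

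The only mildly delicate point, and the one I would be most careful about, is the translation between the list-of-premises formulation of~\eqref{proof-rule} and the single ``pop one head'' premise of the subtyping machine: one must check that when all heads have arity~$1$, the conjunction $t_1''\subtype t_1',\ldots,t_n''\subtype t_n'$ in rule~\eqref{proof-rule} is equivalent to the single subtyping statement on the tails, and that the head $D_1=E_1$ matches by reflexivity exactly as illustrated in step~\eqref{eq:step0}. Once that identification is made explicit, both inductions go through mechanically, and the claimed equivalence follows.
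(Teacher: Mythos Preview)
Your inductive argument is correct and is exactly the natural way to make the correspondence rigorous; the paper itself does not spell out a proof, treating the proposition as immediate from the definitions (``there is nothing deep about subtyping machines''), so your write-up is in fact more detailed than the paper's.

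One notational confusion worth fixing: in rule~\eqref{proof-rule} the number of subtyping premises is the \emph{arity} of the head~$C$ of the supertype, not the length of the class list. When you write that the rule yields ``$n$ subtyping premises which \ldots\ collapse to the single premise $D_2\ldots D_nZ \subtype E_2\ldots E_pZ$'', you are conflating the $n$ of the proposition (length of $D_1\ldots D_n$) with the $n$ of the rule (arity of the head). Since $D_1$ has arity~$1$, the rule produces exactly one subtyping premise from the start; there is nothing to collapse. The same slip recurs in the reverse direction (``expanding the one arity-$1$ subtyping premise into the required list of $n$ premises'') and in your final paragraph. This does not break the proof---you land on the right premise---but the surrounding commentary about collapsing\slash expanding and about ``$D_1=E_1$ matching by reflexivity as in~\eqref{eq:step0}'' should simply be deleted: step~\eqref{eq:step0} is a special instance of the general step (via reflexivity of~$\inh^*$), not a separate matching obligation.
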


\begin{example}\label{ex:<->}
In preparation for the main reduction,
  let us look at one particular subtyping machine.
Consider the following class table:
\begin{align}
  Q^{\sf L}x &\inh LNQ^{\sf L}LNx
&
  Q^{\sf R}x &\inh LNQ^{\sf R}LNx   \label{ex:ql}
\\
  Q^{\sf L}x &\inh EQ^{\sf LR}Nx
&
  Q^{\sf R}x &\inh EQ^{\sf RL}Nx    \label{ex:qe}
\\
  Ex &\inh Q^{\sf LR}NQ^{\sf R}EEx
&
  Ex &\inh Q^{\sf RL}NQ^{\sf L}EEx \label{ex:eq}
\end{align}
and the query $Q^{\sf R}EEZ \subtype LNLNLNEEZ$.
Then, the subtyping machine runs as follows:
\begin{align*}
  &ZEEQ^{\sf R} \< LNLNLNEEZ
\\\leadsto^2\quad
  &ZEENLQ^{\sf R} \< LNLNEEZ
  &&\text{by \eqref{ex:ql}+\eqref{eq:step1}, then \eqref{eq:step0}}
\\\leadsto^2\quad
  &ZEENLNLQ^{\sf R} \< LNEEZ
  &&\text{by \eqref{ex:ql}+\eqref{eq:step1}, then \eqref{eq:step0}}
\\\leadsto^2\quad
  &ZEENLNLNLQ^{\sf R} \< EEZ
  &&\text{by \eqref{ex:ql}+\eqref{eq:step1}, then \eqref{eq:step0}}
\\\leadsto^{\phantom{1}}\quad
  &ZEENLNLNLNQ^{\sf RL} \> EZ
  &&\text{by \eqref{ex:qe}+\eqref{eq:step1}}
\\\leadsto^{\phantom{1}}\quad
  &ZEENLNLNLN \< NQ^{\sf L}EEZ
  &&\text{by \eqref{ex:eq}+\eqref{eq:step1}}
\\\leadsto^{\phantom{1}}\quad
  &ZEENLNLNL \> Q^{\sf L}EEZ
  &&\text{by \eqref{eq:step0}}
\\\leadsto^2\quad
  &ZEENLNL \> Q^{\sf L}LNEEZ
  &&\text{by \eqref{ex:ql}+\eqref{eq:step1}, then \eqref{eq:step0}}
\\\leadsto^2\quad
  &ZEENL \> Q^{\sf L}LNLNEEZ
  &&\text{by \eqref{ex:ql}+\eqref{eq:step1}, then \eqref{eq:step0}}
\\\leadsto^2\quad
  &ZEE \> Q^{\sf L}LNLNLNEEZ
  &&\text{by \eqref{ex:ql}+\eqref{eq:step1}, then \eqref{eq:step0}}
\end{align*}
Some lines group together two execution steps
  so that the overall pattern is clearer:
  we have a head traveling back-and-forth over a tape!
The Java code corresponding to \eqref{ex:ql}, \eqref{ex:qe}, \eqref{ex:eq}
  appears in \autoref{fig:java<->}.
\end{example}

\begin{figure}
{\footnotesize\include{back-and-forth}}
\caption{Java code for \autoref{ex:<->}}
\label{fig:java<->}
\end{figure}

\paragraph{Well-formedness.}
From \cite{kennedy2007}, we know that the subtyping relation is transitive
  if and only if
  certain well-formedness conditions hold.
Instead of stating these conditions in full generality,
  let us do it only for our special case,
  in which arities are~$\le 1$ and everything is contravariant.
Under these conditions,
  \df{well-formedness} requires that,
  if an inheritance rule has the form $Ax \inh D_1\ldots D_n x$,
  then $n$~must be odd.

\section{Main Result}\label{sec:result} 

Given a Turing machine~$\cT$ and an input tape~$\alpha_{\sf I}$,
  we will construct types $t_1$, $t_2$ and a class table
  such that $t_1 \subtype t_2$ if and only if $\cT$ halts on~$\alpha_{\sf I}$.
For each state $q_s \in Q$, we have six classes:
  $Q_s^{\sf wL}$, $Q_s^{\sf wR}$,
  $Q_s^{\sf L}$, $Q_s^{\sf R}$,
  $Q_s^{\sf LR}$ and $Q_s^{\sf RL}$;
  for each letter $a \in \Sigma\cup\{\#\}$, we have a class~$L_a$.
Here, $\#$ is a fresh letter.
We also make use of four auxiliary classes:
  $N$, $E$, $M^{\sf L}$, and~$M^{\sf R}$.

\begin{remark}
Informally, these classes will be used as follows.
A class $Q_s^*$ indicates that we are simulating the Turing state~$s$.
A class $Q_*^{\sf w{*}}$ indicates that we are waiting:
  the head of the subtyping machine
  is not in the same position as the head of the simulated Turing machine;
conversely, if {\sf w} is missing, then the head of the subtyping machine
  is in the same position as the head of the simulated Turing machine.
The superscripts {\sf L}/{\sf R} indicate that the head of the subtyping machine
  is moving {\bf l}eft\slash {\bf r}ight.
The class $Q_*^{\sf LR}$ indicates that the subtyping machine used to move towards
  {\bf l}eft but is now in the process of turning {\sf r}ight.
The classes $M^{\sf L}$~and~$M^{\sf R}$ mark the position of the Turing machine head,
  which can be on the {\bf l}eft of the marker or on the {\bf r}ight of the marker.
The special letter $\#$ marks the two endpoints of the tape,
  and is used for extending the tape of the subtyping machine,
  if more space is needed.
The class $E$ is an end of tape marker, on the outside of~$\#$,
  which helps the subtyping machine head turn around.
Roughly, the class $N$ is a trick that lets us have covariance if we want it,
  without putting it in the formalism.
\end{remark}

The subtyping machine will have configurations
  of one of the forms in \autoref{fig:configs}.
These configurations obey several simple invariants.
If we read off only the symbols of the form $L_*$, we~get
\[
  L_{a_i} L_{a_{i+1}} \ldots L_{a_{k-1}} L_{a_k}
\]
This will represent the tape of a Turing machine,
  with content $a_{i+1}\ldots a_{k-1}$.
We use $L_{a_i}$~and~$L_{a_k}$ as markers of the two endpoints;
  in fact we impose the invariant that $i<k$ and $a_i=a_k=\#$.

\begin{figure*}
\begin{align*}
& ZEE\, NL_{a_i} \ldots NL_{a_l} Q_s^{\sf wR}
  \blacktriangleleft
  L_{a_{l+1}}N \ldots L_{a_{j-1}}N\, M^{\sf R}N\, L_{a_j}N \ldots L_{a_k}N\, EEZ
  &&\text{for $l+1\le j\le k$}
\\
& ZEE\, NL_{a_i} \ldots NL_{a_l} Q_s^{\sf wR}
  \blacktriangleleft
  L_{a_{l+1}}N \ldots L_{a_j}N\, M^{\sf L}N\, L_{a_{j+1}}N \ldots L_{a_k}N\, EEZ
  &&\text{for $l+1 \le j \le k$}
\\
& ZEE\, NL_{a_i} \ldots NL_{a_{j-1}}\, NM^{\sf R}\, NL_{a_j}  \ldots NL_{a_l}
  Q_s^{\sf wR}\blacktriangleleft
  L_{a_{l+1}}N \ldots L_{a_k}N\, EEZ
  &&\text{for $i\le j \le l$}
\\
& ZEE\, NL_{a_i} \ldots NL_{a_j}\, NM^{\sf L}\, NL_{a_{j+1}}  \ldots NL_{a_l}
  Q_s^{\sf wR}\blacktriangleleft
  L_{a_{l+1}}N \ldots L_{a_k}N\, EEZ
  &&\text{for $i\le j\le l$}
\\
& ZEE\, NL_{a_i} \ldots NL_{a_l}
  Q_s^{\sf R}\blacktriangleleft
  L_{a_{l+1}}N \ldots L_{a_k}N\, EEZ
  &&\text{for $j=l+1$}
\\[1ex]
& ZEE\, NL_{a_i} \ldots NL_{a_l}
  \blacktriangleright Q_s^{\sf wL}
  L_{a_{l+1}}N \ldots L_{a_{j-1}}N\, M^{\sf R}N\, L_{a_j}N \ldots L_{a_k}N\, EEZ
  &&\text{for $l+1 \le j \le k$}
\\
& ZEE\, NL_{a_i} \ldots NL_{a_l}
  \blacktriangleright Q_s^{\sf wL}
  L_{a_{l+1}}N \ldots L_{a_j}N\, M^{\sf L}N\, L_{a_{j+1}}N \ldots L_{a_k}N\, EEZ
  &&\text{for $l+1 \le j \le k$}
\\
& ZEE\, NL_{a_i} \ldots NL_{a_{j-1}}\, NM^{\sf R}\, NL_{a_j}  \ldots NL_{a_l}
  \blacktriangleright Q_s^{\sf wL}
  L_{a_{l+1}}N \ldots L_{a_k}N\, EEZ
  &&\text{for $i\le j\le l$}
\\
& ZEE\, NL_{a_i} \ldots NL_{a_j}\, NM^{\sf L}\, NL_{a_{j+1}}  \ldots NL_{a_l}
  \blacktriangleright Q_s^{\sf wL}
  L_{a_{l+1}}N \ldots L_{a_k}N\, EEZ
  &&\text{for $i\le j\le l$}
\\
& ZEE\, NL_{a_i} \ldots NL_{a_l}
  \blacktriangleright Q_s^{\sf L}
  L_{a_{l+1}}N \ldots L_{a_k}N\, EEZ
  &&\text{for $j=l$}
\end{align*}
\caption{
  The form of configurations of the subtyping machine
    being constructed  in \autoref{sec:result}.
  All these subtyping machine configurations simulate the same
    Turing machine configuration;
    namely, $(q_s,a_{i+1}\ldots a_{j-1}, {\color{darkblue}a_j}, a_{j+1}\ldots a_{k-1})$.
  By convention, $a_i=a_k={\#}$.
  The head of the subtyping machine is between $l$ and $l+1$;
    in all cases, $i-1\le l\le k$.
  The head of the simulated Turing machine is at position~$j$.
}
\label{fig:configs}
\end{figure*}

If, in addition to $L_*$,
  we also read off $M^{\sf L}$, $M^{\sf R}$, $Q_s^{\sf L}$ and $Q_s^{\sf R}$,
  then we get one of
\begin{align*}
  L_{a_i} \ldots L_{a_{j-1}}&\color{darkblue}L_{a_j}\color{black}
    M^{\sf L}
  L_{a_{j+1}} \ldots L_{a_k}
\\
  L_{a_i} \ldots L_{a_{j-1}}&\color{darkblue}L_{a_j}\color{black}
    Q_s^{\sf L}
  L_{a_{j+1}} \ldots L_{a_k}
\\
  L_{a_i} \ldots L_{a_{j-1}}
    M^{\sf R}
  &\color{darkblue}L_{a_j}\color{black} L_{a_{j+1}}\ldots L_{a_k}
\\
  L_{a_i} \ldots L_{a_{j-1}}
    Q_s^{\sf R}
  &\color{darkblue}L_{a_j}\color{black} L_{a_{j+1}}\ldots L_{a_k}
\end{align*}
This is how we mark the current symbol~$a_j$ of the simulated Turing machine:
  $M^{\sf L}$~and~$Q_s^{\sf L}$ indicate that the head is on the letter to the left;
  $M^{\sf R}$~and~$Q_s^{\sf R}$ indicate that the head is on the letter to the right.
The markers $M^{\sf L}$~and~$M^{\sf R}$
  always have at least one letter on both their sides;
the marker $Q_s^{\sf L}$ must have a letter on its left
  (but not necessarily on it right);
the marker $Q_s^{\sf R}$ must have a letter on its right
  (but not necessarily on it left).
These conditions account for the constraints on $i,j,k,l$ in \autoref{fig:configs}.

If we read off only the symbols of the form
  $Q_s^{\sf L}$, $Q_s^{\sf R}$, $Q_s^{\sf wL}$ and $Q_s^{\sf wR}$,
  then we see that each configuration in \autoref{fig:configs}
    contains exactly one such symbol:
It tells us that the Turing machine we simulate is in state~$q_s$.

Putting all this together,
  we see that all subtyping machine configurations in \autoref{fig:configs}
  simulate the same configuration of the Turing machine,
  namely $(q_s,a_{i+1}\ldots a_{j-1},a_j,a_{j+1}\ldots a_{k-1})$.

\smallskip

\begin{figure*}
\def\CR#1{{\color{darkred}#1}}
\def\CG#1{{\color{darkgreen}#1}}
\def\CB#1{{\color{darkblue}#1}}
\begin{align*}
\CR{Q_s^{\sf wL}}x &\inh M^{\sf L} N \CR{Q_s^{\sf L}} x
  &&\text{for $\CR{q_s} \in Q$}
&
\CR{Q_s^{\sf L}} x &\inh \CR{L_a N} \CG{Q_{s'}^{\sf wL}} \CB{M^{\sf L} N} \CG{L_b N} x
  &&\text{for $\delta(\CR{q_s},\CR{a})=(\CG{q_{s'}},\CG{b},\CB{\dL})$}
\\
\CR{Q_s^{\sf wL}}x &\inh M^\dR N \CR{Q_s^{\sf wL}} M^\dR N x
  &&\text{for $\CR{q_s} \in Q$}
&
\CR{Q_s^{\sf L}} x &\inh \CR{L_a N} \CG{Q_{s'}^{\sf wL}} \CB{M^{\sf R} N} \CG{L_b N} x
  &&\text{for $\delta(\CR{q_s},\CR{a})=(\CG{q_{s'}},\CG{b},\CB{\dS})$}
\\
\CR{Q_s^{\sf wL}}x &\inh \CB{L_a N} \CR{Q_s^{\sf wL}} \CB{L_a N} x
  &&\text{for $\CR{q_s} \in Q$ and $\CB{a} \in \Sigma\cup\{\#\}$}
&
\CR{Q_s^{\sf L}} x &\inh \CR{L_a N} \CG{Q_{s'}^{\sf wL}} \CG{L_b N} \CB{M^{\sf R} N} x
  &&\text{for $\delta(\CR{q_s},\CR{a})=(\CG{q_{s'}},\CG{b},\CB{\dR})$}
\\
\CR{Q_s^{\sf wL}} x &\inh E \CR{Q_s^{\sf LR}} N x
  &&\text{for $\CR{q_s} \in Q\setminus\{q_{\sf H}\}$}
&
\CR{Q_s^{\sf L}} x
  &\inh \CR{L_\# N} \CG{Q_{s'}^{\sf wL}} \CG{L_\# N} \CB{M^{\sf L} N} \CG{L_b N} x
  &&\text{for $\delta(\CR{q_s},\CR{\bot})=(\CG{q_{s'}},\CG{b},\CB{\dL})$}
\\
Q_{\sf H}^{\sf wL} x &\inh E E Z &&
&
\CR{Q_s^{\sf L}} x
  &\inh \CR{L_\# N} \CG{Q_{s'}^{\sf wL}} \CG{L_\# N} \CB{M^{\sf R} N} \CG{L_b N} x
  &&\text{for $\delta(\CR{q_s},\CR{\bot})=(\CG{q_{s'}},\CG{b},\CB{\dS})$}
\\
E x &\inh \CR{Q_s^{\sf LR}} N \CR{Q_s^{\sf wR}} EEx
  &&\text{for $\CR{q_s} \in Q$}
&
\CR{Q_s^{\sf L}} x
  &\inh \CR{L_\# N} \CG{Q_{s'}^{\sf wL}} \CG{L_\# N L_b N} \CB{M^{\sf R} N} x
  &&\text{for $\delta(\CR{q_s},\CR{\bot})=(\CG{q_{s'}},\CG{b},\CB{\dR})$}
\end{align*}
\caption{
  Class table used to simulate a Turing machine
    $\cT=(Q,q_{\sf I}, q_{\sf H}, \Sigma, \delta)$.
  There are twelve more inheritance rules,
    obtained by swapping $\dL\leftrightarrow\dR$ in the rules above.
}
\label{fig:reduction}
\end{figure*}

Now let us move to the class table,
  which describes how the subtyping machine runs.
The inheritance rules are given in \autoref{fig:reduction}.
We want to start the subtyping machine from the configuration
\[
  ZEE\,
    N L_\#\, NM^\dL\, NL_{a_1} \ldots NL_{a_m}\, NL_\#\,
    Q_{\sf I}^{\sf wR} \blacktriangleleft
  EEZ
\]
where $a_1 \ldots a_m$ is the content of the initial tape~$\alpha_{\sf I}$.
Equivalently, we can say that we want to ask the subtyping query
\[
  Q_{\sf I}^{\sf wR}\,
  L_\#N \,
  L_{a_m} N \ldots L_{a_1}N \,
  M^\dL N \,
  L_\# N \,
  EEZ
    \subtype EEZ
\]

It is trivial to check that the inheritance rules in \autoref{fig:reduction}
  are well-formed.
Now let us check whether they use multiple instantiation inheritance.
For this, we build an inheritance graph,
  which has an arc $A\to B$ for each inheritance rule of the form $Ax \inh Bt$.
\begin{center}
\begin{tikzpicture}[yscale=0.5,xscale=1.5]
  \node (qilr) at (0,2) {$Q_s^{\sf LR}$};
  \node (qirl) at (0,0) {$Q_s^{\sf RL}$};
  \node (e) at (1,1) {$E$};
  \node (qiwl) at (2,2) {$Q_s^{\sf wL}$};
  \node (qiwr) at (2,0) {$Q_s^{\sf wR}$};
  \node (mlr) at (3.5,2) {$M^{\sf L}, M^{\sf R}$};
  \node (la) at (3.5,0) {$L_a,L_{\#}$};
  \node (qil) at (5,2) {$Q_s^{\sf L}$};
  \node (qir) at (5,0) {$Q_s^{\sf R}$};
  \draw[->] (e) -- (qilr);
  \draw[->] (e) -- (qirl);
  \draw[->] (qiwl) -- (e);
  \draw[->] (qiwl) -- (mlr);
  \draw[->] (qiwl) -- (la);
  \draw[->] (qiwr) -- (e);
  \draw[->] (qiwr) -- (mlr);
  \draw[->] (qiwr) -- (la);
  \draw[->] (qil) -- (la);
  \draw[->] (qir) -- (la);
\end{tikzpicture}
\end{center}
Recall that the Turing machine we simulate is deterministic;
  in particular, for each $(q_s,a)$ there is a single arc $Q_s^{\sf L}\to L_a$.
We see that all walks (which are possibly self-intersecting paths)
  are uniquely identified by their endpoints.
Thus, multiple instantiations are not possible.

The class table is large, but it is based on a few simple ideas.
In the middle of the tape, all evolutions follow the following pattern,
  which is a simple generalization of what we saw in \autoref{ex:<->}:
\begin{align*}
& \alpha Q \blacktriangleleft S N \beta
\\\leadsto\quad
& \alpha N S_1 NS_2 Q'N \blacktriangleright N \beta
  &&\text{by \eqref{eq:step1} and a rule \eqref{eq:mid-rule}}
\\\leadsto\quad
& \alpha N S_1 NS_2 Q' \blacktriangleleft \beta
  &&\text{by \eqref{eq:step0}}
\end{align*}
Here,
  (i)~$\alpha$~and~$\beta$ are some strings of symbols,\;
  (ii)~$S,S_1,S_2$ are letters~$L_*$ or markers~$M^*$, \; and
  (iii)~the class table contains an inheritance rule of the form
\begin{align}
Qx \inh SN\,Q'\, S_2N\, S_1N\, x
  \label{eq:mid-rule}
\end{align}
If we ignore the padding with $N$ temporarily,
  we see that the effect is the following rewriting:
\[
  \color{darkred}Q \blacktriangleleft \color{darkgreen}S\color{black}
  \quad\Longrightarrow\quad
  \color{darkgreen}S_1 S_2 \color{darkred}Q' \blacktriangleleft {}
\]
The `state' changes from~$Q$ to~$Q'$,
  and the `symbol' $S$ is replaced with $S_1S_2$.
Also, the `head' $\blacktriangleleft$ passes over the symbol~$S$ that it processed.
In fact, the head of the subtyping machine will keep moving back-and-forth,
  just as in \autoref{ex:<->}.

The reversal of the `head' at the end of the `tape'
  is performed exactly as in \autoref{ex:<->}.
During reversal, configurations contain $Q_s^{\sf RL}$~or~$Q_s^{\sf LR}$,
  thus stepping outside of the set listed in \autoref{fig:configs},
  but only temporarily.

At this point we have a head that traverses the tape back-and-forth,
  carries with it a state,
  and can replace a symbol on the tape by several symbols.
It is now a simple exercise to figure out
  how to use markers $M^{\sf L}$~and~$M^{\sf R}$ to track
  the head of the Turing machine we simulate.
It is also a simple exercise to figure out how to use the markers $L_\#$
  at the endpoints of the tape in order to handle the case
  in which the tape needs to be extended.
Finally, it is an easy exercise to figure out how to make
  the subtyping machine halt (soon) after it reaches a configuration
  that simulates the halting state~$q_{\sf H}$ of the Turing machine.
In fact, all these exercises are solved in \autoref{fig:reduction}.

We have proved the main result:
\thmmain*
\noindent
The construction described in the previous proof is implemented
  \cite{website},
  so that the reader can test the proof by trying it on example Turing machines.
The examples include
  counting in binary,
  computing the Ackermann function,
  and checking whether ZF~is consistent.

\section{Application: Fluent Interfaces} \label{sec:fluent} 

The reduction from Turing machines (\autoref{sec:result})
  lets us conclude that typechecking is undecidable.
But, it is not immediately clear
  whether we can harness the reduction to do something useful.
In this section,
  we see that the reduction can be used to implement a parser generator
  for fluent interfaces.
The parser generator is efficient in theory (\autoref{thm:fluent}),
  but not in practice.
Before \autoref{thm:fluent},
  we had no reason to believe
    that a practically efficient parser generator for fluent interfaces is possible;
after \autoref{thm:fluent},
  we have no reason to believe that it is impossible.

\smallskip

Consider a Java library.
We say that it has a \df{fluent interface}
  when it encourages its users to chain method calls:
  $f().g().h()$.
Programs written in the fluent style are more readable
  than programs written in the traditional, sequential style:
  $f(); g(); h()$.
Why?
In the fluent style,
  the order in which methods are called is constrained by the type checker.
If the library is well designed,
  its types will constrain the chains of method calls
  such that they are easy to read and understand.
In the traditional style,
  we cannot use the type checker to enforce readability.

\cite{det-cfg-java} propose a principled method for designing fluent interfaces.
We start by describing the allowed chains of method calls using a context free grammar:
  method names are tokens,
  and allowed chains of method calls correspond to words in a context free language.
Their main result is the following:

\begin{theorem}[\cite{det-cfg-java}]
Given is a \textcolor{darkred}{deterministic} context free grammar~$G$
  that describes a language $\cL\subseteq\Sigma^*$
  over an alphabet~$\Sigma$ of method names.
We can construct
  Java class definitions,
  a type~$T$,
  and expressions ${\it Start}$, ${\it Stop}$
such that the code
\[ T\; \ell \quad=\quad
  {\it Start}.f^{(1)}().f^{(2)}()\ldots f^{(m)}().{\it Stop} \]
type checks if and only if $f^{(1)}f^{(2)}\ldots f^{(m)} \in \cL$.
\end{theorem}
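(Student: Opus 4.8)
The plan is to reduce the problem to running the Cocke--Younger--Kasami (CYK) parsing algorithm at the type level, via the imperative-to-Java-types compiler of \autoref{sec:compiler}. First I would put $G$ into Chomsky normal form; this costs only a polynomial increase in $|G|$, and the empty chain ($m=0$) becomes a trivial base case that accepts iff $\epsilon\in\cL$. Over a CNF grammar, CYK decides whether a length-$m$ word lies in $\cL$ using $O(m^3\,|G|)$ steps, each of which is an array read or write plus a membership test or insertion on an $|N|$-bit set (where $N$ is the nonterminal alphabet). I would express this algorithm in the simple imperative language of \autoref{sec:compiler}, laying out memory as one flat tape that holds (i) the input word $a_1\ldots a_m$ and (ii) the $O(m^2)$ CYK cells, each an $|N|$-bit set: the usual triple loop over substring length, start index, and split point, with an inner loop over the $O(|G|)$ productions, fills the table, and a final lookup asks whether the start symbol occurs in the cell spanning the whole word.

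Next I would wire the fluent interface on top of this program. $\mathit{Start}$ is an expression whose type $\mathrm{Str}\langle\epsilon\rangle$ encodes the empty tape; for each token $f\in\Sigma$ there is one method $f()$ whose return type, applied to a receiver of type $\mathrm{Str}\langle w\rangle$, is $\mathrm{Str}\langle wf\rangle$, i.e.\ it appends $f$ to the word recorded in the receiver. After the chain $\mathit{Start}.f^{(1)}()\ldots f^{(m)}()$ the receiver type records exactly $w=f^{(1)}\ldots f^{(m)}$, and the method $\mathit{Stop}$ transforms it into the type $t_w$ that is the initial configuration of the subtyping machine obtained by compiling the CYK program on input $w$ (reading $m$ and $w$ off the recorded tape); I then take $T$ to be the ``accepting'' type of that machine, playing the role $EEZ$ plays in \autoref{sec:result}. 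Since the assignment $T\;\ell = e$ type-checks exactly when the static type of $e$ is a subtype of $T$, it type-checks iff $t_w\subtype T$, which by the Proposition of \autoref{sec:subtype-machine} holds iff the subtyping machine has a halting run from $t_w$; because that machine is deterministic (no multiple instantiation inheritance) and the CYK program always terminates, this happens iff the program accepts, i.e.\ iff $w\in\cL$. Correctness is thus the composition of correctness of CNF conversion, of CYK, of the compiler of \autoref{sec:compiler}, and of the Proposition of \autoref{sec:subtype-machine}; throughout, every construct has to stay inside the contravariant, arity-$\le 1$ fragment the compiler targets, which forces the now-familiar encoding of variance by wildcards and of arrays and counters by the compiler's gadgets.

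For the size and time bounds I would argue as follows. The class definitions are the fixed interpreter classes of \autoref{sec:compiler}, plus $O(|G|)$ classes encoding the CNF productions and $|\Sigma|$ classes for the token methods, so their total size is polynomial in $|G|$. The compiler simulates each step of the imperative program by a bounded number of subtyping-machine steps (or, if a tape cell is accessed by a linear scan, by $O(\text{tape length})$ of them), and each subtyping-machine step is a single application of rule~\eqref{proof-rule} that the type checker carries out in polynomial time and without backtracking (determinism again). Since the CYK run uses $O(m^3\,|G|)$ steps over a tape of length $O(m^2\,|G|)$, the whole type-check terminates in time polynomial in $|G|$ and in the chain length $m$ --- equivalently, polynomial in the size of the problem instance.

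There is no deep obstacle here: the theorem amounts to ``CYK runs in polynomial time, and any polynomial-time predicate of the method chain can be evaluated by the type checker''. The delicate part is the plumbing --- arranging for the chain to accumulate $w$ into a receiver type without ever exceeding arity~$1$, recovering the length $m$ at $\mathit{Stop}$ so that the CYK loops know their bounds, and making sure that a random access into the $\Theta(m^2\,|G|)$-cell table costs only polynomially much (a linear scan over the tape suffices). If the compiler of \autoref{sec:compiler} does not already expose addressable memory, one extra layer --- simulating a random-access store by the compiler's native data structures --- has to be inserted, still at polynomial cost.
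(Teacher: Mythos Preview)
Your proposal is correct, but there is a mismatch worth flagging: the statement you were asked to prove is a \emph{cited} result from \cite{det-cfg-java}; the present paper does not prove it and only remarks that the original construction proceeds via a conversion from deterministic pushdown automata to realtime jump-deterministic pushdown automata \cite{courcelle}. What you have written is not that argument at all---it is, essentially, the paper's own proof of the \emph{stronger} \autoref{thm:fluent}, which drops the determinism hypothesis and adds the polynomial size and time bounds you spend your last two paragraphs on. So you have proved more than was asked, by a route that coincides with the paper's route for \autoref{thm:fluent}: CYK parser $\to$ Simper program $\to$ Turing machine $\to$ class table, with a builder class supplying $\mathit{Start}$, $\mathit{Stop}$, and the per-token methods.

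A couple of minor discrepancies with the paper's actual construction, neither fatal: the paper does not go all the way to Chomsky normal form but only shortens productions to length~$\le 2$ and eliminates nullable nonterminals (handling unit productions by a bounded fixed-point loop inside the dynamic program, \autoref{fig:cyk-pseudo}); and the Simper language already provides multidimensional arrays, so your worry about ``one extra layer'' simulating random access is unnecessary---the compiler of \autoref{sec:simper->tm} handles array indexing by counted scans over the tape, which is exactly the polynomial-cost linear scan you anticipate. Your complexity accounting is in the right spirit; the paper's more careful analysis lands at $\tilde O(|\alpha|^9\cdot|G|^8)$ subtyping-machine steps, which is worse than your estimate but still polynomial.
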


Their construction relies on a conversion
  from deterministic pushdown automata
  to realtime jump-deterministic pushdown automata
  \cite{courcelle}.
Instead, we can use \autoref{thm:main} to show a similar but stronger result:
  (a)~the grammar
    will not be required to be deterministic,
  (b)~the generated code
    will be guaranteed to be polynomial in the size of the grammar, and
  (c)~the generated code can be type checked in polynomial time.
Point~(a) means that
  instead of a parser generator for ${\rm LR}(k)$ grammars,
  we give a parser generator for all context free grammars.
But, the key improvements are points (b)~and~(c),
  which mean that the parser is (theoretically) efficient.

\thmfluent*

\paragraph{Proof outline.}
From the grammar~$G$, we generate a CYK parser,
  implemented in a simple imperative language.
Then, we compile the parser into a Turing machine.
Then, we compile the Turing machine into Java class definitions,
  as in \autoref{fig:reduction}.
Finally, we construct the expression
\begin{align}
  {\it Start}.f^{(1)}()\ldots f^{(m)}().{\it Stop}
  \label{eq:java-word}
\end{align}
so that it has a type~$S$
  and so that checking whether $S \subtype T$
    is equivalent to running the parser on the word $f^{(1)}\ldots f^{(m)}$.
\qed

\medskip
Once we know \autoref{thm:main},
  it is not difficult to see that the transformations mentioned above are possible,
  even if working out all details is tedious.
In what follows,
  we unravel enough details to clarify that the construction is polynomial.
This description will also give a high level view of the accompanying
  proof of concept implementation~\cite{website}.

\subsection{Builders}\label{sec:builders} 

We start with the last part of the proof:
  how to choose the expressions ${\it Start}$ and ${\it Stop}$,
  which occur in~\eqref{eq:java-word}.
We want to initialize the subtyping machine in the following configuration:
\[
  \overbrace{
  \underbrace{\strut ZEENL_\# N M^\dL}_{\strut {\it Start}}
  \underbrace{\strut N L_{f^{(1)}}}_{\strut f^{(1)}()}
    \ldots
  \underbrace{\strut NL_{f^{(m)}}}_{\strut f^{(m)}()}
  \underbrace{\strut NL_\# Q_{\sf I}^{\sf wR}}_{\strut {\it Stop}}
  }^{\strut S}
  \<
  \overbrace{\strut EEZ}^{\strut T}
\]
Over braces, we have shorthand names for the types:
\begin{align*}
  S &\defeq
    Q_{\sf I}^{\sf wR}
    L_\# N
      L_{f^{(m)}} N \ldots L_{f^{(1)}} N
      M^\dL N
    L_\# N
    T \\
  T &\defeq EEZ
\end{align*}
Under braces,
  we see how various parts of the expression~\eqref{eq:java-word} build the type~$S$.
More precisely,
  we make use of an abstract class~$B$
  that is implemented as in \autoref{fig:builder}.
Given such a class~$B$, we can easily build the required type~$S$; for example,
\begin{Verbatim}[commandchars=\\\{\}]
  \PY{n}{E}\PY{o}{\PYZlt{}}\PY{o}{?}\PY{k+kd}{super} \PY{n}{E}\PY{o}{\PYZlt{}}\PY{o}{?}\PY{k+kd}{super} \PY{n}{Z}\PY{o}{\PYZgt{}}\PY{o}{\PYZgt{}} \PY{n}{l} \PY{o}{=}
    \PY{n}{B}\PY{o}{.}\PY{n+na}{start}\PY{o}{.}\PY{n+na}{a}\PY{o}{(}\PY{o}{)}\PY{o}{.}\PY{n+na}{a}\PY{o}{(}\PY{o}{)}\PY{o}{.}\PY{n+na}{b}\PY{o}{(}\PY{o}{)}\PY{o}{.}\PY{n+na}{c}\PY{o}{(}\PY{o}{)}\PY{o}{.}\PY{n+na}{stop}\PY{o}{(}\PY{o}{)}\PY{o}{;}
\end{Verbatim}

would check if the word ${\it aabc}$ is in the language being considered.

\begin{figure}
\include{builder}
\caption{
  Builder class, for obtaining a Turing tape out of a chain of method calls.
  In this example, the alphabet is $\Sigma=\{a,b,c\}$.
  In the general case, there is one method for each letter.
}
\label{fig:builder}
\end{figure}

\subsection{Background: Grammars and Parsers} \label{sec:cyk} 

We continue with the first part of the proof:
  how to go from a context free grammar to a parser.
The content of this section is standard,
  but it is included for completeness and ease of reference.

Given a finite set~$\Sigma$,
  we aim to specify subsets of~$\Sigma^*$.
We call $\Sigma$ an \df{alphabet},
  we call its elements \df{letters} or \df{terminal symbols} or \df{terminals},
  and we call subsets of~$\Sigma^*$~\df{languages}.
To define languages,
  we introduce another finite set~$\Gamma$,
  whose elements we call \df{nonterminal symbols} or \df{nonterminals}.
We denote terminals by lowercase Latin letters ($a$,~$b$, $c$,~\dots),
  and nonterminals by uppercase Latin letters ($A$,~$B$, $C$,~\dots).
We denote strings of symbols, from $\Sigma\cup\Gamma$,
  by lowercase Greek letters ($\alpha$,~$\beta$, $\gamma$, \dots),
  with $\epsilon$ reserved for the empty string.
We write $|\alpha|$ for the length of~$\alpha$.
We denote symbols (terminal or nonterminal) by $\theta_1, \ldots, \theta_n$.
A \df{production} is a relation $A\to\theta_1\ldots\theta_n$ for $n\ge0$,
  where $A$~is a nonterminal and $\theta_1\ldots\theta_n$ is a string of symbols.
A (context free) \df{grammar}~$G$ is a set of productions
  together with a distinguished \df{start nonterminal}~$S$.

Given a grammar, we can use its productions to inductively define \df{parse trees}:
(a)~if $a$ is a nonterminal, then $a$~is a parse tree of~$a$; and
(b)~if $A\to\theta_1\ldots\theta_n$ is a production,
  and $T_k$~is a parse tree of the symbol~$\theta_k$ for all $k\in\{1,\ldots,n\}$,
  then $(A,[T_1,\ldots,T_n])$ is a parse tree of~$A$.
If the leaves of a parse tree~$T$ read from left to write give $\alpha\in\Sigma^*$,
  we say that \df{$T$~is a parse tree of~$\alpha$}.
The \df{language}~$\cL$ defined by~$G$ is the set of strings $\alpha\in\Sigma^*$
  that have some parse tree among the parse trees of the start nonterminal~$S$.

Given string~$\alpha$ and a grammar~$G$ the membership problem asks
  whether $\alpha\in\cL$, where $\cL$~is the language defined by~$G$.
The membership problem can be solved by a straightforward recursion,
  known as the CYK algorithm.
We define a relation $\cyk$ inductively:
(a)~if $a$~is a nonterminal, then $\cyk(a,a)$ holds; and
(b)~if $A\to\theta_1\ldots\theta_n$ is a production,
  $\alpha$~is the concatenation $\alpha_1\ldots\alpha_n$,
  and $\cyk(\alpha_k,\theta_k)$ holds for all $k\in\{1,\ldots,n\}$,
  then $\cyk(\alpha,A)$ holds.
We have $\alpha\in\cL$ if and only if $\cyk(\alpha,S)$ holds,
  where $S$~is the start nonterminal.

One could compute $\cyk$ by simply using the definition from above
  and a Datalog engine that knows about strings.
Alternatively,
  one could compute $\cyk$ using a memoized recursive function,
  but then some care is needed to avoid nontermination.
Because the definition of $\cyk$ is monotone, the fix is easy:
If $\cyk(\alpha,A)$~is called while computing $\cyk(\alpha,A)$,
  then return ${\bf false}$ immediately.
The intuition is that, in an inductive definition (as opposed to a coinductive one),
  $\cyk(\alpha,A)$ cannot justify itself.

What is the complexity of the CYK algorithm?
The first argument of $\cyk$ is a substring of~$\alpha$,
  so there are $O\bigl(|\alpha|^2\bigr)$ possibilities.
If the second argument is a terminal, then $\cyk$~does only a constant amount of work;
  thus, the work for all terminals is~$O\bigl(|\alpha|^2\cdot|\Sigma|\bigr)$.
If the second argument is a nonterminal~$A$,
  then the algorithm iterates over all productions of the form
  $A\to\theta_1\ldots\theta_n$.
For each such production,
  the string $\alpha$ is split into $n$~substrings $\alpha_1,\ldots,\alpha_n$.
There are $\binom{|\alpha|+n-1}{n-1}$ ways to do so, which is exponential in~$n$.
For example, if $|\alpha|=n$, then the number of ways is $\Theta(4^n/\sqrt{n})$.

Luckily, there is a simple fix:
  We can transform the grammar into another one
  which defines the same language but has short productions.
We can do so by repeatedly replacing productions of the form
    $A\to\theta_1\ldots\theta_i\theta_{i+1}\ldots\theta_n$
  by
    $A \to \theta_1\ldots\theta_i A'$ and
    $A' \to \theta_{i+1}\ldots\theta_n$,
  where $A'$~is a fresh nonterminal.
This will increase the size of the grammar.
Let us define the size formally as
\[
  |G| \defeq |\Sigma| + \sum_{\mathclap{A\to\theta_1\ldots\theta_n}} (n+1)
\]
which is the size of the alphabet~$\Sigma$ plus the number of symbols necessary
  to write down the productions of~$G$.
Each replacement increases $|G|$ by~$2$,
  because of the two occurrences of the fresh nonterminal~$A'$.
If we take care to apply replacements only when both $\theta_i$~and~$\theta_{i+1}$
  come from the original grammar
  (as opposed to being fresh nonterminals introduced by previous replacements),
  then we will do at most $|G|$ replacements.
So, overall, the size~$|G|$ of the grammar increases only linearly.

Now suppose that, using the previous transformation,
  we limit the maximum size of productions to $n=2$.
Then, for one fixed production,
  there are $|\alpha|+1$ possible ways to split $\alpha$ into substrings.
Thus, the total runtime is $O\bigl(|\alpha|^3\cdot|G|\bigr)$.

The CYK algorithm is usually implemented
  not in the memoized version discussed so far
  but in a dynamic programming version.
In dynamic programming,
  we must be explicit about the order
  in which we evaluate $\cyk(\alpha,A)$.
It is clear we should should look at substrings from shortest to longest.
But, it is not clear in which order we should look at nonterminals.
The standard solution to this conundrum is to preprocess the grammar even more.
We compute the set $\{\,A\mid\cyk(\epsilon,A)\,\}$ of \df{nullable} nonterminals,
  perhaps by using our memoized version of~$\cyk$.
Then, for each production
  $A\to\theta_1\ldots\theta_{i-1}B\theta_{i+1}\ldots\theta_n$
  where $B$~is a nullable nonterminal,
  we introduce a new production
  $A\to\theta_1\ldots\theta_{i-1}\theta_{i+1}\ldots\theta_n$,
  and repeat this until a fixed-point is reached.
Finally, we remove all productions of the form $A\to\epsilon$.
In the resulting grammar,
  $\cyk(\epsilon,A)$ is ${\bf false}$ for all nonterminals~$A$,
  but $\cyk$~remains otherwise unchanged.
The grammar increases in size,
  but only linearly if the productions had size bounded by a constant.
The main advantage of the new grammar is that it is now safe split $\alpha$
  only in \emph{nonempty} substrings $\alpha_1,\ldots,\alpha_n$.
This is not sufficient to ensure a unique order, dependant just on~$|G|$,
  which means we still have to perform a fixed point computation to handle
  the productions of the form $A\to\theta$.

\begin{figure}
\begin{alg}
\^  $\proc{member}(\alpha)$
\0  if $\alpha=\epsilon$ return $\nullable(S)$
\0  initialize table $T$ with ~false~ everywhere
\0  for $i\in\{0,1,\ldots,|\alpha|-1\}$
\0    \comment for each terminal $a$,
\1    if $\alpha[i]=a$ then $T[i,i+1,a]:={\bf true}$
\0  for $k\in\{2,3,\ldots,|\alpha|\}$
\1    for $i\in\{0,1,\ldots,|\alpha|-k\}$
\2      for $j\in\{i+1,i+2,\ldots,i+k-1\}$
\2        \comment for each production $A\to\theta_1\theta_2$,
\3        if $T[i,j,\theta_1] \land T[j,i+k,\theta_2]$
\4          $T[i,i+k,A]:={\bf true}$
\1    for $i\in\{0,1,\ldots,|\alpha|-k\}$
\2      repeat $|G|$ times
\2        \comment for each production $A\to\theta$,
\3        if $T[i,i+k,\theta]$ then $T[i,i+k,A]:={\bf true}$
\0  return $T[0,|\alpha|,S]$
\end{alg}
\smallskip
\caption{
  Pseudocode of the CYK parser variant used in the implementation.
  The parser is for a grammar~$G$ whose starting nonterminal is~$S$.
  We use $T[i,j,\theta]$ to represent $\cyk(\alpha[i:j),\theta)$.
  The pseudocode assumes that the grammar has been preprocessed
    as described in the main text.
  Line~5 is a template: it is repeated for each terminal~$a$;
  lines 10--11 form a template: it is repeated for each binary production;
  line~15 is a template: it is repeated for each unary production.
}\label{fig:cyk-pseudo}
\end{figure}

The pseudocode is given in \autoref{fig:cyk-pseudo}.
On line~1, we check the special case of the empty string.
This is precomputed, perhaps using the memoized version.
In lines 3--5, we handle terminals;
  there is one copy of line~5 for each terminal~$a$.
The loop on line~6 goes over substrings of length~$k$,
  from $k=2$ to $k=|\alpha|$.
Lines 7--11 handle binary productions $A\to\theta_1\theta_2$;
lines 12--15 handle unary productions $A\to\theta$.
There are no other productions, because of the preprocessing we performed.
For unary productions,
  the loop on line~13 is responsible with finding a fixed point.
A fixed point is guaranteed to be reached in $|G|$~iterations,
  because each iteration must change the status of at least one nonterminal,
  but the loop could also stop early if it detects a fixed point.

The variant of CYK we use (\autoref{fig:cyk-pseudo}) runs in
  $\Theta\bigl(|\alpha|^3\cdot|G|+|\alpha|^2\cdot|G|^2\bigr)$ time
  and $\Theta(|\alpha|^2\cdot|G|)$ space.
Moreover,
  the size of the parser's source code is $\Theta\bigl(|G|\bigr)$.
For a plethora of other CYK variants, see \cite{cyk}.

\smallskip

The bounds from above are important for establishing \autoref{thm:fluent}.
First,
  if we want the generated Java classes to be polynomial in the size of the grammar,
  it must be that the size of the parser is polynomial in the size of the grammar.
We showed above that the size of the parser is $O\bigl(|G|\bigr)$.
Second,
  if we want the typechecking of the Java code to be done in polynomial time,
  it must be that the parser works in polynomial time.
We showed above that the parser works in $O\bigl(|\alpha|^3\cdot|G|^2\bigr)$ time.

The reader may ask why not stick to the simpler memoized version,
  instead of going for the traditional dynamic programming version?
The reason is a pragmatic one.
We will implement the parser in a language
  that we will then translate into Turing machines.
To ease the translation to Turing machines,
  we want the intermediate language to be simple.
One way we make it simple is by not having procedure calls.
These are not needed for dynamic programming,
  but they would be needed for memoization.

\medskip

The existing result of \cite{det-cfg-java} handles only context free languages
  that are deterministic.
Intuitively, as explained by \cite{knuth-lr},
  \df{deterministic languages} are those
    that can be parsed in one pass from left to right.
Formally,
  \df{deterministic languages} are those defined by ${\rm LR}(k)$ grammars
  or, equivalently, by ${\rm LR}(1)$ grammars or by deterministic pushdown automata.
\df{Unambiguous grammars} are those for which each string has at most one parse tree.
Unambiguous grammars are a strict superset of ${\rm LR}$ grammars.
The latter could be called prefix-unambiguous:
  not only that strings need to correspond to at most one parse tree,
  but also prefixes must correspond to at most one incomplete parse tree.
A language is said to be \df{inherently ambiguous}
  if there exist grammars that define it but all are ambiguous;
  a standard example \cite{hopull} is
    $\cL_{{\rm ambig}} \;\defeq\; \cL_{()()} \cup \cL_{(())}$
  where
\begin{align*}
  \cL_{()()} \;&\defeq\; \{\,a^mb^mc^nd^n\mid m,n\in\NN\,\} \\
  \cL_{(())} \;&\defeq\; \{\,a^mb^nc^nd^m\mid m,n\in\NN\,\}
\end{align*}
Clearly,
  inherently ambiguous languages are not deterministic and therefore
  cannot be recognized with the approach of \cite{det-cfg-java}.
That is why we shall use $\cL_{{\rm ambig}}$ as an example later on.

\subsection{A Simper Language}\label{sec:simper} 

This section introduces a simple, imperative language,
  which we call \emph{Simper}.
In Simper, we can easily implement CYK parsers.
Furthermore, we can easily compile any Simper program into a Turing machine
  (\autoref{sec:simper->tm}).


\begin{figure*}
\def\l#1{\underline{\sf#1}}
\def\alt{\quad|\quad}
\begin{align*}
  s \;&\to\;
    \ell \l{:}
    \alt \l{goto}\,\ell
    \alt l\l{:=}v
    \alt \l{if}\,c\,\l{\{}s^*\l{\}}\,\bigl(\l{else}\,\l{\{}s^*\l{\}}\bigr)^?
    \alt \l{++}l
    \alt \l{--}l
    \alt \l{halt}
  &&\text{statements (core)}
\\
  s \;&\to\;
    \l{while}\,c\,\l{\{}s^*\l{\}}
    \alt \l{switch}\,v\,\l{\bigl\{}\bigl(v\,\l{\{}s^*\l{\}}\bigr)^*\l{\bigr\}}
  &&\text{statements (sugar)}
\\
  v \;&\to\; l\alt r
  &&\text{values}
\\
  l \;&\to\; {\it id}\bigl(\l{[}v(\l{,}v)^*\l{]}\bigr)^?
  &&\text{left values}
\\
  r \;&\to\;
    {\it nat}
    \alt \l{\text{``}} {\it string} \l{\text{''}}
    \alt \l{array}\l{[}v(\l{,}v)^*\l{]}\l{(}v\l{)}
  &&\text{right values}
\\
  c \;&\to\;
    c\,\l{\tt \&\&}\,c
    \alt c\,\l{\tt ||}\,c
    \alt v\,\l{\tt==}\,v
    \alt v\,\l{\tt!\!=}\,v
  &&\text{conditions}
\end{align*}
\caption{
  Simper syntax.
  The underlined parts appear as given; that is, they are terminals.
  The notation $a\to b\mid c$ is shorthand for $a\to b$ and $a\to c$, as usual.
  The notation $s^*$ is shorthand for `list of zero or more $s$'.
  The notation $s^?$ is shorthand for `zero or one $s$'.
}
\label{fig:simper-syntax}
\end{figure*}

The syntax of Simper is given in \autoref{fig:simper-syntax}.
Minsky machines have counters and zero tests.
The RAM computation model adds an infinite array with store and load statements
  \cite{goldreich}.
Instead,
  Simper adds named arrays of arbitrary size,
  a tiny type system,
  and some syntactic sugar.
These features are added to ease the implementation of parsers,
  such as the one in \autoref{fig:cyk-pseudo}.
Other features, such as procedures and arithmetic operations,
  are not added in order to keep the translation to Turing machines simple.

\begin{figure*}
\begin{align*}
\def\arraystretch{3}
\begin{array}{*4c}
\dfrac%
  {l\;:\;t \qquad v\;:\;t}%
  {l := v \;:\; {\rm unit}}
&
\dfrac%
  {c\;:\;{\rm bool}\qquad b_1\;:\;{\rm unit} \qquad b_2\;:\;{\rm unit}}%
  {{\sf if}\,c\,\{b_1\}\,{\sf else}\,\{b_2\} \;:\; {\rm unit}}
&
\dfrac%
  {l \;:\; {\rm nat}}%
  {\mathop{{+}{+}}l \;:\; {\rm unit}}
&
\dfrac%
  {}%
  {{\sf halt} \;:\; {\rm unit}}
\\
\dfrac%
  {c_1 \;:\; {\rm bool} \qquad c_2 \;:\; {\rm bool}}%
  {c_1\mathop{{\&}{\&}}c_2 \;:\; {\rm bool}}
&
\dfrac%
  {x \;:\; {\rm array}\,n\,t
    \qquad e_1 \;:\; {\rm nat}
    \quad\ldots\quad
    e_n \;:\; {\rm nat}}%
  {x[v_1,\ldots,v_n] \;:\; t}
&
\dfrac%
  {s_1\;:\;{\rm unit} \quad\ldots\quad s_n\;:\;{\rm unit}}%
  {s_1\ldots s_n \;:\; {\rm unit}}
&
\dfrac%
  {}%
  {0 \;:\; {\rm nat}}
\\
\dfrac%
  {v_1 \;:\; t \qquad v_2 \;:\; t}%
  {v_1\mathop{{=}{=}}v_2 \;:\; {\rm bool}}
&
\dfrac%
  {v' \;:\; t
    \qquad v_1 \;:\; {\rm nat}
    \quad\ldots
    \quad v_n \;:\; {\rm nat}}%
  {{\sf array}[v_1,\ldots,v_n](v') \;:\; {\rm array}\,n\,t}
&
\dfrac%
  {}%
  {x \;:\; t}
&
\dfrac%
  {}%
  {\text{``{\tt foo}''} \;:\; {\rm sym}}
\end{array}
\end{align*}
\caption{
  Some representative typing rules of Simper.
}\label{fig:simper-types}
\end{figure*}

The type system of Simper is given in \autoref{fig:simper-types}.
The basic types are
  $\rm nat$ (for nonnegative integers),
  $\rm sym$ (for symbols), and
  $\rm bool$ (for booleans).
The type ${\rm array}\,n\,t$ is the type of
  $n$-dimensional arrays with elements of type~$t$.
We assume a special $1$-dimensional array of symbols called $\it input$,
  whose length is given by a special variable~$n$.
Thus, ${\it input}\;:\;{\rm array}\,1\,{\rm sym}$ and $n\;:\;{\rm nat}$.
Simper programs output only one bit:
  whether they halt (by executing {\sf halt})
  or they get stuck (by reaching the end of the list of statements).

The semantics are straightforward, with a few quirks.
First, since ${\rm nat}$ holds only nonnegative integers,
  $\mathop{{-}{-}}0$ evaluates to~$0$.
Second,
  the notation ${\sf array}[v_1,\ldots,v_n](v')$ stands
  for an array filled with~$v'$ that has $n$~dimensions,
  and the valid indices along the $i$th dimension are $0,\ldots,v_i-1$.

\subsection{Example: Parsers for $\cL_{\rm ambig}$} 

Instead of giving formal semantics for Simper, let us just look at some examples.
The following is a specialized parser for $\cL_{\rm ambig}$:
\begin{Verbatim}
i := 0
a := 0  while i != n && input[i] == "a" { ++i ++a }
b := 0  while i != n && input[i] == "b" { ++i ++b }
c := 0  while i != n && input[i] == "c" { ++i ++c }
d := 0  while i != n && input[i] == "d" { ++i ++d }
if i == n {
  if a == b && c == d { halt }
  if a == d && b == c { halt }
}
\end{Verbatim}
Like in Java, in the expression $c_1\mathop{{\&}{\&}}c_2$,
  we evaluate $c_2$ only if $c_1$~evaluates to false.

Alternatively,
  we can describe $\cL_{\rm ambig}$ using a context free grammar,
  and then we can apply the recipe from \autoref{sec:cyk}.
Let us do this,
  to illustrate what the proof of concept implementation actually generates.

We start with the following grammar:
\begin{align*}
S &\to X
&
X &\to a X d
&
X &\to F
&
Y &\to E G
\\
S &\to Y
&
E &\to a E b
&
F &\to b F c
&
G &\to c G d
\\
&&
E &\to \epsilon
&
F &\to \epsilon
&
G &\to \epsilon
\end{align*}
After limiting the length of the right hand sides to $\le 2$
  and after eliminating $\epsilon$ from the language of each nonterminal,
  we are left with the binary productions
\begin{align*}
Y &\to EG
&
X &\to a X'
&
X' &\to X d
\\
F &\to b F'
&
F' &\to F c
&
E &\to a E'
\\
E' &\to E b
&
G &\to c G'
&
G' &\to G d
\end{align*}
and the unary productions
\begin{align*}
S &\to X
&
X &\to F
&
Y &\to E
&
Y &\to G
\\
S &\to Y
&
E' &\to b
&
F' &\to c
&
G' &\to d
\end{align*}

We can now start implementing the pseudcode from \autoref{fig:cyk-pseudo} in Simper.
Since $\epsilon \in \cL_{\rm ambig}$, we start with a check for this special case:
{\footnotesize
\begin{Verbatim}
  if n == 0 { halt }
\end{Verbatim}
}
Next, we choose some arbitrary injective mapping from symbols to nonnegative integers.
{\tiny
\begin{align*}
\begin{matrix}
  0 & 1 & 2 & 3 & 4 & 5 & 6 & 7 & 8 & 9 & 10 & 11 & 12 & 13 & 14
\\
  S & X & X' & Y & Y' & E & E' & F & F' & G & G' & a & b & c & d
\end{matrix}
\end{align*}}
We can now continue with the initialization of the array~$T$:
{\footnotesize
\begin{Verbatim}
    sn := n   ++sn   T := array[sn,sn,15](0)
    i := 0  si := 1  while i != n {
      switch input[i] {
        "a" { T[i,si,11] := 1 }
        "b" { T[i,si,12] := 1 }
        "c" { T[i,si,13] := 1 }
        "d" { T[i,si,14] := 1 }
      }
      ++i   ++si
    }
\end{Verbatim}
}
\noindent Finally, we have the main loop:
{\footnotesize
\begin{Verbatim}[commandchars=\\\{\},codes={\catcode`$=3}]
    k := 2  while k != sn \{
      i := 0  ik := k  while ik != sn \{
        j := i  ++j  while j != ik \{
          // for $Y \to EG$
          if T[i,j,5] == 1 && T[j,ik,9] == 1 \{
            T[i,ik,3] := 1
          \}
          ... eight other binary productions ...
          ++j
        \}
        ++i   ++ik
      \}
      i := 0  ik := k  while ik != sn \{
        j := 0  while j != 11 \{ // 11 nonterminals
          // for $S \to X$
          if T[i,ik,1] == 1 \{ T[i,ik,0] := 1 \}
          ... seven other unary productions ...
          ++j
        \}
        ++i   ++ik
      \}
    \}
    if T[0,n,0] == 1 \{ halt \}
\end{Verbatim}
}
\noindent
There are three things to notice.
First,
  the lack of arithmetic in Simper forces us to introduce some auxiliary variables,
  such as {\tt ik} which stands for $i+k$.
Overall, however, the inconveniences are minor.
Second,
  these deviations from the pseudocode
  do not affect the size and the running time of the program.
But, third,
  we also notice that the CYK parser is significantly more complicated
  than the specialized parser we started with.

\smallskip

To complete the argument for \autoref{thm:fluent},
  we continue by showing how a Java type checker
  can be used as an interpreter for any Simper program.

\section{Using a Java Compiler as an Interpreter}\label{sec:compiler} 

This section describes a compiler of Simper programs into Java code.
The generated Java code type checks if and only if
  the original Simper program reaches a {\sf halt} statement.
Therefore, a Java type checker can be used as an interpreter.
The compilation is done in two phases:
  a translation from Simper programs into Turing machines
    (\autoref{sec:simper->tm}),
  followed by a translation from Turing machines into Java code
    (\autoref{sec:tm->java}).
The reason for splitting the compilation in two phases
  is that we get to reuse the proof of \autoref{thm:main},
  for the second phase.
We shall use an extended version of Turing machines (\autoref{sec:etm}),
  which slightly complicates the second phase but greatly simplifies the first phase.

The compiler described in this section is part of the proof of \autoref{thm:fluent}.
So, this section has three goals:
(1)~to argue that the compilation is correct,
(2)~to argue that the increase in code size is polynomial only, and
(3)~to argue that programs with a polynomial runtime
  will still have a polynomial runtime after compilation.
The last two points do not guarantee that the compilation is practical,
  but they do suggest that practical compilation can be achieved
  (\autoref{sec:efficiency}).

\subsection{Extended Turing Machines}\label{sec:etm} 

Why do we want to extend Turing machines?
The type ${\rm nat}$ of Simper allows arbitrarily large nonnegative integers.
This means that we do not know in advance
  how many digits are necessary to store the value of a ${\rm nat}$ variable.
In turn, this means that the space allocated for some ${\rm nat}$ variables
  will occasionally need to be extended.
With a normal Turing machine tape,
  one would need to shift the content of roughly half of the tape.
Instead, we will allow extended Turing machines to insert symbols,
  as a primitive operation.
The proof of \autoref{thm:main} can easily be adjusted to account for such insertions.

An \df{extended Turing machine}~$\cE$
  is a tuple $(Q,q_{\sf I},q_{\sf H},\Sigma,\delta)$, where
  $Q$~is a finite set of \df{states},
  $q_{\sf I}$~is the \df{initial state},
  $q_{\sf H}$~is the \df{halt state},
  $\Sigma$~is a finite \df{alphabet},
  and $\delta : Q \times \Sigma_{\bot} \to Q \times \Sigma^* \times \{\dL,\dS,\dR\}$
    is a \df{transition function}.
This definition is the same with that of Turing machines,
  except for the type of the transition function.
A \df{configuration} is a tuple $(q,\alpha,b,\gamma)$
  of the current state~$q$,
  the left part of the tape $\alpha\in\Sigma^*$,
  the current symbol $b\in\Sigma_{\bot}$, and
  the right part of the tape $\gamma\in\Sigma^*$.
The \df{execution steps} of~$\cE$ are the following:
\begin{align*}
  (q,\alpha a,b,\gamma) &\to (q',\alpha,a,\beta\gamma)
    &&\text{for $\delta(q,b)=(q',\beta,\dL)$}
\\
  (q,\alpha,b,\gamma) &\to (q',\alpha,b',\gamma)
    &&\text{for $\delta(q,b)=(q',b',\dS)$}
\\
  (q,\alpha,b,c\gamma) &\to (q',\alpha\beta,c,\gamma)
    &&\text{for $\delta(q,b)=(q',\beta,\dR)$}
\end{align*}
If $\delta(q,b)=(q',\beta,d)$, we require that $d=\dS$ only if $|\beta|=1$.
We also allow for execution steps that go outside the existing tape:
\begin{align*}
  (q,\epsilon,b,\gamma) &\to (q',\epsilon,\bot,\beta\gamma)
    &&\text{for $\delta(q,b)=(q',\beta,\dL)$}
\\
  (q,\alpha,b,\epsilon) &\to (q',\alpha\beta,\bot,\epsilon)
    &&\text{for $\delta(q,b)=(q',\beta,\dS)$}
\end{align*}
As before,
  we require that
    $\delta(q_{\sf H},b)=(q_{\sf H},b,\dS)$ for all $b\in\Sigma$,
    and $\delta(q_{\sf H},\bot)=(q_{\sf H},b,\dS)$ for some $b\in\Sigma$.
A \df{run} on input tape $\alpha_{\sf I}$
  is a sequence of execution steps starting
  from configuration $(q_{\sf I},\epsilon,\bot,\alpha_{\sf I})$.
If $\cE$~reaches $q_{\sf H}$ we say that $\cE$~halts on~$\alpha_{\sf I}$.

\begin{proposition}
We can convert between Turing machines and extended Turing machine
  with only polynomial increases in machine size
  and while preserving polynomial runtimes.
\end{proposition}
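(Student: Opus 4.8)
The plan is to prove the proposition in both directions by giving explicit simulations. The easier direction is that a Turing machine is a special case of an extended Turing machine: every transition $\delta(q,b)=(q',b',d)$ of a Turing machine already has the form $\delta(q,b)=(q',\beta,d)$ with $|\beta|=1$, which in particular satisfies the side condition `$d=\dS$ only if $|\beta|=1$'. Comparing the two lists of execution steps, the extended-machine steps with $|\beta|=1$ are literally the Turing-machine steps (writing $\beta=b'$), and the extra extended step $(q,\alpha,b,\epsilon)\to(q',\alpha\beta,\bot,\epsilon)$ for a $\dS$-move off the right end collapses, when $|\beta|=1$, to $(q,\alpha,b,\epsilon)\to(q',\alpha b',\bot,\epsilon)$, which is the Turing-machine $\dR$-step off the right end followed by nothing — actually one checks the conventions line up so that the reachability relation on configurations is identical. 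Hence halting is preserved on the nose, machine size is unchanged, and runtime is unchanged.

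For the converse, I would compile an extended Turing machine $\cE=(Q,q_{\sf I},q_{\sf H},\Sigma,\delta)$ into an ordinary Turing machine $\cT$ that simulates it step for step. The only real work is implementing a transition $\delta(q,b)=(q',\beta,d)$ when $|\beta|=\ell\ge 2$: the head must overwrite the current cell and \emph{insert} $\ell-1$ fresh cells. I would do this with a standard right-shift gadget: a block of $O(\ell)$ auxiliary states that repeatedly picks up the symbol under the head, writes down the symbol it is carrying, moves one cell right, and repeats, until it reaches the blank at the right end of the used tape; this shifts the entire suffix right by $\ell-1$ positions in $O(n)$ steps, where $n$ is the current tape length, after which the head walks back $O(n)$ cells and writes the $\ell$ symbols of $\beta$ into the freshly opened gap, then resumes in state $q'$ at the correct cell according to $d\in\{\dL,\dS,\dR\}$. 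Summing $\ell-1$ over all rules shows the state set of $\cT$ has size $|Q|+O\bigl(\sum_{(q,b)}|\delta(q,b)|\bigr)$, i.e. linear in $|\cE|$ measured with the natural size that counts the symbols written in $\delta$. Each extended step costs $O(n)$ ordinary steps, and since $n$ is bounded by the (extended-machine) runtime plus $|\alpha_{\sf I}|$, a run of length $R$ on a tape that never exceeds length $L$ becomes a run of length $O(R\cdot L)=O(R^2+R|\alpha_{\sf I}|)$ — polynomial in the original runtime, as required. Halting is preserved because $\cT$ enters a designated $q_{\sf H}$ exactly when $\cE$ does.

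The main obstacle is purely bookkeeping: getting the shift-and-insert gadget to handle correctly (i) the boundary case where the insertion happens at or past the right end of the used tape — in which case there is nothing to shift and one simply appends $\beta$ — and (ii) the interaction with the off-the-tape execution steps of $\cE$, which themselves already allow writing multiple symbols when moving $\dL$ off the left end. For the left-end case the gadget must instead open a gap on the left, which is symmetric. I expect no conceptual difficulty, only the usual tedium of specifying a few dozen auxiliary states and checking that the configuration-to-configuration map commutes with $\to$; I would present the gadget schematically (carrying one symbol at a time) and assert that the routine verification goes through, exactly as the paper does for the analogous constructions in Sections~\ref{sec:result} and~\ref{sec:compiler}.
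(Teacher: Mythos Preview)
Your proposal is correct and follows essentially the same approach as the paper's proof sketch: the forward direction is trivial, and the backward direction is handled by shifting a half-tape whenever an extended transition needs to insert extra cells, with the polynomial runtime bound following because the tape length is itself bounded by the runtime. You have simply fleshed out the details (the explicit carry-one-symbol shift gadget, the $O(R\cdot L)$ accounting) that the paper leaves implicit; the only minor quibble is that your state count for the shift gadget omits the factor of $|\Sigma|$ needed to remember the carried symbol, but this does not affect the polynomial conclusion.
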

\begin{proof}[Proof sketch]
The conversion from Turing machines to extended Turing machines is trivial.
The conversion from extended Turing machines to Turing machines
  can be done by shifting half-tapes whenever the length of the tape increases.
If the original runtime was polynomial,
  then the length of the tape is polynomial
  and so the extra work for shifting is also polynomial.
\end{proof}

\subsection{From Programs to Turing Machines}\label{sec:simper->tm} 

This section describes a compiler of Simper programs into extended Turing machines.
We will see
  (a)~how the Turing tape is organized,
  (b)~what are the main ideas for translating the core Simper statements, and
  (c)~how the compiler itself is organized.

\paragraph{Tape Content.} 

Simper variables hold values of two types, ${\rm nat}$ and ${\rm sym}$.
(No variable has type ${\rm bool}$, which is used for conditions.)
The type ${\rm nat}$~is unbounded: any integer can be incremented.
The type ${\rm sym}$~is bounded:
  we can check if two symbols are equal
  but we cannot combine symbols to produce other symbols.
Values of type ${\rm nat}$ will be represented using bits;
values of type ${\rm sym}$ will be present in the alphabet of the target machine.
So, each constant of type ${\rm sym}$ from the Simper program (such as ``{\tt foo}'')
  will have a letter in the alphabet of the target machine.

In general,
  the alphabet of the target machine is made out of bits, symbols, and markers.
One use of markers is to delimit tape zones that store variable values.
Each variable~$x$ of the Simper program is stored in a tape zone delimited
  by the markers ${\langle_x}$~and~${\rangle_x}$.
The tape starts with zones for the special variables (${\it input}$ and~$n$)
  and continues with zones for the other variables,
  in no particular order.

We also use markers to represent arrays.
Each element of a $1$-dimensional array is delimited
  by markers ${\langle^0}$~and~${\rangle^0}$.
If $a$~is an $n$-dimensional array,
  let us write $a[i]$ for an $(n-1)$-dimensional subarray such that
    $a[i][i_1,\ldots,i_{n-1}] = a[i,i_1,\ldots,i_{n-1}]$
  for all $i_1,\ldots,i_{n-1}$.
Let us also write $\rep{a}$ for the representation of value~$a$.
Then, if $a$~is an $(n+1)$-dimensional array
  whose $k$th coordinate ranges over $0,\ldots,d_k-1$,
  we define
\begin{align}
  \rep{a} &\defeq
    \bigl\langle^n \rep{a[0]} \bigr\rangle^n
    \bigl\langle^n \rep{a[1]} \bigr\rangle^n
    \ldots
    \bigl\langle^n \rep{a[d_0-1]} \bigr\rangle^n
\label{eq:array-rep}
\end{align}
This is a recursive definition.
To specify the base case, we simply regard ${\rm array}\,0\,t$ as isomorphic to~$t$:
\begin{align}
  {\rm array}\,0\,t \;\simeq\; t
  \label{eq:array-dim0}
\end{align}

For example,
  the representation of the $2$-dimensional array
    $\begin{psmallmatrix}0&1\\2&3\end{psmallmatrix}$
  is the following:
\begin{align*}
  \rep{\begin{psmallmatrix}0&1\\2&3\end{psmallmatrix}}
&=
  \langle^1 \rep{(0\;1)} \rangle^1
  \langle^1 \rep{(2\;3)} \rangle^1
  &&\text{by \eqref{eq:array-rep}}
\\&=
  \langle^1
    \langle^0 \rep{(0)} \rangle^0
    \langle^0 \rep{(1)} \rangle^0
  \rangle^1
  \langle^1
    \langle^0 \rep{(2)} \rangle^0
    \langle^0 \rep{(3)} \rangle^0
  \rangle^1
  &&\text{by \eqref{eq:array-rep}}
\\&=
  \langle^1
    \langle^0 \rep{0} \rangle^0
    \langle^0 \rep{1} \rangle^0
  \rangle^1
  \langle^1
    \langle^0 \rep{2} \rangle^0
    \langle^0 \rep{3} \rangle^0
  \rangle^1
  &&\text{by \eqref{eq:array-dim0}}
\\&=
  \langle^1
    \langle^0 0 \rangle^0
    \langle^0 1 \rangle^0
  \rangle^1
  \langle^1
    \langle^0 01 \rangle^0
    \langle^0 11 \rangle^0
  \rangle^1
\end{align*}
On the last line,
  ${\rm nat}$ values are represented in binary,
  with the most significant bit towards the right.
(This bit order, which is used in the implementation, is an arbitrary convention.)


\paragraph{Translation of Statements.} 

Roughly speaking,
  Simper programs can be visualized as flowgraphs
    whose arcs are labeled by statements.
Each vertex of the flowgraph will correspond to a state of the Turing machine.
The translation work consists of transforming each statement arc
  into a set of Turing transitions which have the same effect.

The increment statement $\mathop{{+}{+}}l$ is simulated in two phases.
First, the head of the Turing machine moves to the tape zone designated by~$l$.
Second, 
  a prefix of the form $11\ldots10$ is changed into $00\ldots01$.
If the content of the tape zone designated by~$l$ is all $1$s,
  then they are all changed to $0$s and an extra $1$ is inserted at the end.
We can do such insertions because we use an extended model of Turing machines.
The decrement statement $\mathop{{-}{-}}l$ is simulated analogously.

What is the tape zone designated by~$l$?
If $l$~is a variable~$x$,
  then the tape zone it designates
  is the zone between the markers ${\langle_x}$~and~${\rangle_x}$.
One could go left until $\langle_{\it input}$ is reached,
  and then right until $\langle_x$~is reached.
At this point,
  the head of the Turing machine is at the left extremity of the tape zone we want.
If $l$~is an array element $a[v_0,\ldots,v_{n-1}]$,
  then the tape zone it designates is more difficult to find.
We begin similarly, by finding~$\langle_a$.
Let $i_k$ be the integer to which $v_k$ evaluates.
What we want to do is to
  move right until the $(i_0+1)$th $\langle^{n-1}$ marker is found,
  then move right until $(i_1+1)$th $\langle^{n-2}$ marker is found,
  and so on.
At the end we would have found a marker $\langle^0$.
The tape zone we want is between this marker and its $\rangle^0$~pair.

However, since the values $i_0,\ldots,i_{n-1}$ are unbounded,
  we must keep track of them using the tape.
The easiest way to do so is to introduce auxiliary variables,
  and hence allocate extra tape zones.
We do so in a preprocessing step that works as follows.
Suppose a statement~$s$ contains the subexpression $a[v_0,\ldots,v_{n-1}]$.
We replace the subexpression by $a[x_0,\ldots,x_{n-1}]$,
  where $x_0,\ldots,x_{n-1}$ are fresh variables,
  and we insert before~$s$ assignments $x_k:=v_k$ for each $k$ in $0,\ldots,n-1$.
Because some of the expressions $v_0,\ldots,v_{n-1}$
  might themselves refer to array elements,
  we apply this transformation recursively.
The result is that all array accesses have the form $a[x_0,\ldots,x_{n-1}]$.
Moreover,
  the values stored in $x_0,\ldots,x_{n-1}$ are used for this array access only,
  which means that after the access we can change their values
    without affecting the semantics of the program.
In fact, we will change their value \emph{while} we access the array element.

Finding the tape zone where an array element is stored
  involves several steps of the form
    `move right until the $(i_k+1)$th $\langle^{n-k-1}$ marker'.
We will implement this by executing
  the step `move right until a $\langle^{n-k-1}$ marker' $i_k+1$ times.
How do we repeat $i_k+1$ times?
Recall that, after preprocessing, $i_k$~is the value stored in variable~$x_k$,
  and also recall that we can change the content of variable~$x_k$.
So, we alternate between moving right and decrementing~$x_k$.
This, of course, requires yet another marker
  to remember our position in the array while we decrement~$x_k$.

To simulate assignments $l\mathrel{:=}r$, we proceed as follows.
First, we find the tape zone designated by~$l$,
  we remove its content,
  and we mark its position by~$\downarrow$.
These operations would be cumbersome with traditional Turing machines,
  but are easy with extended Turing machines.
Second, we store the value of~$r$ in the zone marked by~$\downarrow$.
How we do this second step depends on what exactly $r$~is.
If $r$~is a variable or an array access, then it designates a tape zone.
In that case, we identify that tape zone, mark its left side by~$\uparrow$,
  and then copy from~$\uparrow$ to~$\downarrow$ until a $\rangle$ delimiter is found.
If $r$~is a constant of type ${\rm nat}$~or~${\rm sym}$,
  then we simply write that value in the zone marked by~$\downarrow$.
If $r$~is a constant of type ${\rm array}\,n\,t$,
  then we construct an array representation as in~\eqref{eq:array-rep},
  using the trick with auxiliary variables for counting,
    just as we did for locating array elements.
And third, we finish by removing markers, such as $\uparrow$~and~$\downarrow$.

To simulate an ${\sf if}$ statement, we must evaluate conditions.
Let us see how to evaluate an atomic condition $v_1\mathrel{==}v_2$.
The expressions $v_1$~and~$v_2$ can be variables, array accesses, or constants.
Using auxiliary variables, as we did for array accesses,
  we can reduce the problem of evaluating $v_1\mathrel{==}v_2$
  to the problem of evaluating $x_1\mathrel{==}x_2$,
  with $x_1$~and~$x_2$ being variables.
To compare the values of $x_1$~and~$x_2$,
  we mark the left extremities of their tape zones by $\downarrow$~and~$\uparrow$,
  respectively.
Then we repeatedly compare the symbols on the right of the two markers,
  and move the markers to the right, until a $\rangle$~delimiter is reached.
To compare one pair of symbols,
  we must remember one of the symbols while we move from~$\downarrow$ to~$\uparrow$.
Because there is only a finite number of symbols,
  we can encode the required information in the state of the Turing machine.
The other atomic condition, $v_1\mathrel{!\!=}v_2$,
  is simulated analogously.

Composing conditions, labels, and ${\sf goto}$ statements
  do not pose any interesting challenges.

\paragraph{Organization of the Compiler.} 

The compiler is written in OCaml.
Its source code reads like that of an interpreter
  that is written in an imperative style.
For example, the translation of assignments is done by the following function:
{\small\begin{Verbatim}[commandchars=\\\{\}]
\PY{k}{let} \PY{n}{convert\PYZus{}assignment} \PY{n}{left} \PY{n}{right} \PY{o}{:} \PY{n}{transformer} \PY{o}{=}
  \PY{o}{(} \PY{n}{markLvalue} \PY{n}{target\PYZus{}mark} \PY{n}{left}
  \PY{o}{\PYZam{}} \PY{n}{eraseBitsAfter} \PY{n}{target\PYZus{}mark}
  \PY{o}{\PYZam{}} \PY{n}{assign} \PY{n}{right} \PY{o}{)}
\end{Verbatim}
}
\noindent
`Find and mark the tape zone designated by {\tt left};
  then erase its content;
  then fill it with the value of {\tt right}.'
This code closely matches the informal description we saw earlier.
Let us see how this is achieved.

Turing machines are represented essentially as lists of transitions.
{\small\begin{Verbatim}[commandchars=\\\{\}]
\PY{k}{type} \PY{n}{machine} \PY{o}{=}
  \PY{o}{\PYZob{}} \PY{n}{states} \PY{o}{:} \PY{n}{state} \PY{k+kt}{list}
  \PY{o}{;} \PY{n}{alphabet} \PY{o}{:} \PY{n}{letter} \PY{k+kt}{list}
  \PY{o}{;} \PY{n}{transitions} \PY{o}{:} \PY{n}{transition} \PY{k+kt}{list} \PY{o}{\PYZcb{}}
\end{Verbatim}
}\noindent
The compiler makes heavy use of transformers.
{\small\begin{Verbatim}[commandchars=\\\{\}]
\PY{k}{type} \PY{n}{transformer} \PY{o}{=} \PY{o}{(}\PY{n}{state} \PY{o}{*} \PY{n}{machine}\PY{o}{)} \PY{o}{\PYZhy{}\PYZgt{}} \PY{o}{(}\PY{n}{state} \PY{o}{*} \PY{n}{machine}\PY{o}{)}
\end{Verbatim}
}\noindent
Intuitively, if $t$ is a transformer,
  then calling $t(q,\cT)$
    will add some transitions emanating from state~$q$
    and eventually converging in a fresh state~$q'$.
The call $t(q,\cT)$ then returns a pair $(q',\cT')$,
  where $\cT'$~is a machine obtained by adding the extra transitions to~$\cT$.
Suppose we have a transformer $t_1$ that simulates the Simper statement~$s_1$,
  and a transformer $t_1$ that simulates the Simper statement~$s_2$.
If we want to simulate executing $s_1$ and then $s_2$,
  then we need to apply $t_1$ and then $t_2$ to the Turing machine being built.
To compose transformers, we use the following straightforward combinators:
{\small\begin{Verbatim}[commandchars=\\\{\}]
\PY{k}{let} \PY{o}{(} \PY{o}{\PYZam{}} \PY{o}{)} \PY{n}{f} \PY{n}{g} \PY{n}{x} \PY{o}{=} \PY{n}{g} \PY{o}{(}\PY{n}{f} \PY{n}{x}\PY{o}{)}
\PY{k}{let} \PY{n}{seqs} \PY{o}{=} \PY{n+nn}{List}\PY{p}{.}\PY{n}{fold\PYZus{}left} \PY{o}{(} \PY{o}{\PYZam{}} \PY{o}{)} \PY{o}{(}\PY{k}{fun} \PY{n}{x}\PY{o}{\PYZhy{}\PYZgt{}}\PY{n}{x}\PY{o}{)}
\end{Verbatim}
}\noindent
We can now see that the definition of {\tt convert\_assignment}
  simply composes three transformers.

To implement ${\sf if}$ statements,
  we also make use of three new types of transformers:
  {\tt branch}, {\tt join}, and {\tt transformer2}.
{\small\begin{Verbatim}[commandchars=\\\{\}]
\PY{k}{type} \PY{n}{state2} \PY{o}{=} \PY{o}{\PYZob{}} \PY{n}{yes\PYZus{}branch} \PY{o}{:} \PY{n}{state}\PY{o}{;} \PY{n}{no\PYZus{}branch} \PY{o}{:} \PY{n}{state} \PY{o}{\PYZcb{}}
\PY{k}{type} \PY{n}{branch} \PY{o}{=} \PY{n}{state} \PY{o}{*} \PY{n}{machine} \PY{o}{\PYZhy{}\PYZgt{}} \PY{n}{state2} \PY{o}{*} \PY{n}{machine}
\PY{k}{type} \PY{n}{join} \PY{o}{=} \PY{n}{state2} \PY{o}{*} \PY{n}{machine} \PY{o}{\PYZhy{}\PYZgt{}} \PY{n}{state} \PY{o}{*} \PY{n}{machine}
\PY{k}{type} \PY{n}{transformer2} \PY{o}{=} \PY{n}{state2} \PY{o}{*} \PY{n}{machine} \PY{o}{\PYZhy{}\PYZgt{}} \PY{n}{state2} \PY{o}{*} \PY{n}{machine}
\end{Verbatim}
}\noindent
Such transformers are useful whenever we need to take decisions.
For example, when we copy a value from one variable to another,
  we need to stop when we see a $\rangle$~delimiter.
For taking such decisions, we use branching and joining transformers.
For a concrete and simple example of how such transformers are used,
  let us look at how the ${\sf if}$ statement is translated.
{\small\begin{Verbatim}[commandchars=\\\{\}]
\PY{k}{let} \PY{n}{convert\PYZus{}if} \PY{n}{condition} \PY{n}{ys} \PY{n}{ns} \PY{o}{:} \PY{n}{transformer} \PY{o}{=}
  \PY{o}{(} \PY{n}{branch\PYZus{}on} \PY{n}{condition}
    \PY{o}{\PYZam{}} \PY{n}{yes\PYZus{}branch} \PY{o}{(}\PY{n}{convert\PYZus{}body} \PY{n}{ys}\PY{o}{)}
    \PY{o}{\PYZam{}} \PY{n}{no\PYZus{}branch} \PY{o}{(}\PY{n}{convert\PYZus{}body} \PY{n}{ns}\PY{o}{)}
  \PY{o}{\PYZam{}} \PY{n}{join} \PY{o}{)}
\end{Verbatim}
}\noindent
The types involved are the following:
{\small\begin{Verbatim}[commandchars=\\\{\}]
\PY{k}{val} \PY{n}{branch\PYZus{}on} \PY{o}{:} \PY{n}{condition} \PY{o}{\PYZhy{}\PYZgt{}} \PY{n}{branch}
\PY{k}{val} \PY{n}{yes\PYZus{}branch} \PY{o}{:} \PY{n}{transformer} \PY{o}{\PYZhy{}\PYZgt{}} \PY{n}{transformer2}
\PY{k}{val} \PY{n}{no\PYZus{}branch} \PY{o}{:} \PY{n}{transformer} \PY{o}{\PYZhy{}\PYZgt{}} \PY{n}{transformer2}
\PY{k}{val} \PY{n}{join} \PY{o}{:} \PY{n}{join}
\end{Verbatim}
}\noindent
The combinator {\tt yes\_branch} is implemented as follows:
{\small\begin{Verbatim}[commandchars=\\\{\}]
\PY{k}{let} \PY{n}{yes\PYZus{}branch} \PY{n}{t} \PY{o}{=} \PY{k}{fun} \PY{o}{(}\PY{o}{\PYZob{}} \PY{n}{yes\PYZus{}branch}\PY{o}{;} \PY{n}{no\PYZus{}branch} \PY{o}{\PYZcb{}}\PY{o}{,} \PY{n}{tm}\PY{o}{)} \PY{o}{\PYZhy{}\PYZgt{}}
  \PY{k}{let} \PY{n}{yes\PYZus{}branch}\PY{o}{,} \PY{n}{tm} \PY{o}{=} \PY{n}{t} \PY{o}{(}\PY{n}{yes\PYZus{}branch}\PY{o}{,} \PY{n}{tm}\PY{o}{)} \PY{k}{in}
  \PY{o}{(}\PY{o}{\PYZob{}} \PY{n}{yes\PYZus{}branch}\PY{o}{;} \PY{n}{no\PYZus{}branch} \PY{o}{\PYZcb{}}\PY{o}{,} \PY{n}{tm}\PY{o}{)}
\end{Verbatim}
}\noindent
The combinator {\tt no\_branch} is implemented analogously.
The (join) transformer {\tt join} adds a fresh state and no-op transitions to it
  from both of the given states.
The (branch) transformer builder {\tt branch\_on} is more involved,
  since it depends on {\tt condition}.
The function {\tt convert\_body} is defined as follows:
{\small\begin{Verbatim}[commandchars=\\\{\}]
\PY{k}{let} \PY{k}{rec} \PY{n}{convert\PYZus{}body} \PY{o}{(}\PY{n}{xs} \PY{o}{:} \PY{n}{statement} \PY{k+kt}{list}\PY{o}{)} \PY{o}{:} \PY{n}{transformer} \PY{o}{=}
  \PY{n}{seqs} \PY{o}{(}\PY{n+nn}{List}\PY{p}{.}\PY{n}{map} \PY{n}{convert\PYZus{}statement} \PY{n}{xs}\PY{o}{)}
\PY{o+ow}{and} \PY{n}{convert\PYZus{}statement} \PY{o}{(}\PY{n}{x} \PY{o}{:} \PY{n}{statement}\PY{o}{)} \PY{o}{:} \PY{n}{transformer} \PY{o}{=}
  \PY{c}{(*}\PY{c}{ ... uses convert\PYZus{}body ... }\PY{c}{*)}
\end{Verbatim}
}\noindent
where {\tt seqs}, defined earlier, takes a list of transformers
  and chains them sequentially.

Transformers and combinators for transformers --
  this is the mortar used to build the compiler.
The bricks are transformers such as
  {\tt goLeft}, {\tt goToMark}, {\tt insertOnRight}, {\tt write},
  and many others.


\subsection{From Turing Machines to Java}\label{sec:tm->java} 

The next step is to compile Turing machines into Java code.
This is achieved by a tiny compiler,
  which implements the proof of \autoref{thm:main},
  with small adaptations to allow for our extensions (\autoref{sec:etm}).

The difference between normal and extended Turing machines
  can be seen in the transition function type:
\begin{align*}
  &\text{normal type}
&
  Q\times\Sigma_\bot &\to Q\times\Sigma\times\{-1,0,+1\}
\\
  &\text{extended type}
&
  Q\times\Sigma_\bot &\to Q\times\Sigma^{\color{darkred}*}\times\{-1,0,+1\}
\end{align*}
Consider an extended Turing machine such that $|\beta|=1$
  whenever $\delta(q,b)=(q',\beta,d)$.
It is easy to see that such an extended machine trivially corresponds to a normal one.
Accordingly,
  we translate such an extended machine
  in exactly the same way as we would translate the normal one.
We only need to do something different for those transitions with $|\beta|\ne1$.
These differences are illustrated in \autoref{fig:reduction-extended-delta}.
Briefly, each $\beta=b_0\ldots b_{n-1}$ in the transition function
  results in a string $L_{b_0}N\ldots L_{b_{n-1}}N$ of classes
  in the corresponding inheritance rule.
The special case $|\beta|=1$ is identical
  to the translation for non-extended transition functions.

\begin{figure*} 
\def\CR#1{{\color{darkred}#1}}
\def\CG#1{{\color{darkgreen}#1}}
\def\CB#1{{\color{darkblue}#1}}
\begin{align*}
\CR{Q_s^{\sf L}}x &\inh
  \CR{L_b N}
  \CG{Q_{s'}^{\sf wL}}
  \CB{M^{\sf L} N}
  \CG{L_{b_0} N L_{b_1} N \ldots L_{b_{n-1}} N} x
  &&\text{for $\delta(\CR{q_s},\CR{b})
    =(\CG{q_{s'}},\CG{b_0b_1\ldots b_{n-1}},\CB{\dL})$}
\\
\CR{Q_s^{\sf L}} x &\inh
  \CR{L_b N}
  \CG{Q_{s'}^{\sf wL}}
  \CB{M^{\sf R} N}
  \CG{L_{b'} N}
  x
  &&\text{for $\delta(\CR{q_s},\CR{b})=(\CG{q_{s'}},\CG{b'},\CB{\dS})$}
\\
\CR{Q_s^{\sf L}}x &\inh
  \CR{L_b N}
  \CG{Q_{s'}^{\sf wL}}
  \CG{L_{b_0} N L_{b_1} N \ldots L_{b_{n-1}} N}
  \CB{M^{\sf R} N}
  x
  &&\text{for $\delta(\CR{q_s},\CR{b})
    =(\CG{q_{s'}},\CG{b_0b_1\ldots b_{n-1}},\CB{\dR})$}
\\[0.5ex]
\CR{Q_s^{\sf L}}x &\inh
  \CR{L_\# N}
  \CG{Q_{s'}^{\sf wL}}
  \CG{L_\# N}
  \CB{M^{\sf L} N}
  \CG{L_{b_0} N L_{b_1} N \ldots L_{b_{n-1}} N}
  x
  &&\text{for $\delta(\CR{q_s},\CR{\bot})
    =(\CG{q_{s'}},\CG{b_0b_1\ldots b_{n-1}},\CB{\dL})$}
\\
\CR{Q_s^{\sf L}} x &\inh
  \CR{L_\# N}
  \CG{Q_{s'}^{\sf wL}}
  \CG{L_\# N}
  \CB{M^{\sf R} N}
  \CG{L_{b'} N}
  x
  &&\text{for $\delta(\CR{q_s},\CR{\bot})=(\CG{q_{s'}},\CG{b'},\CB{\dS})$}
\\
\CR{Q_s^{\sf L}}x &\inh
  \CR{L_\# N}
  \CG{Q_{s'}^{\sf wL}}
  \CG{L_\# N}
  \CG{L_{b_0} N L_{b_1} N \ldots L_{b_{n-1}} N}
  \CB{M^{\sf R} N}
  x
  &&\text{for $\delta(\CR{q_s},\CR{\bot})
    =(\CG{q_{s'}},\CG{b_0b_1\ldots b_{n-1}},\CB{\dR})$}
\end{align*}
\caption{
  Class table used to simulate an \emph{extended} Turing machine
    $\cE=(Q,q_{\sf I},q_{\sf H},\Sigma,\delta)$.
  There are six more rules obtained from the ones above
    by swapping ${\sf L}\leftrightarrow{\sf R}$.
  The inheritance rules not involving the transition function~$\delta$
    are the same as in \autoref{fig:reduction}.
}
\label{fig:reduction-extended-delta}
\end{figure*}

\smallskip

This concludes our translation from Simper programs to Java code.
The Simper program halts if and only if the Java code type checks,
  on all inputs.
For Simper programs, the input is provided in the special array ${\it input}$.
For Java code, the input is provided by a builder (\autoref{sec:builders}).

\subsection{Discussion on Efficiency}\label{sec:efficiency} 

The previous sections (\autoref{sec:simper->tm}~and~\autoref{sec:tm->java})
  describe a translation from Simper programs to Java code.
Is it efficient?
In theory, yes; in practice, no.
We consider two aspects of the generated Java code:
(1)~its size, and
(2)~how fast it can be type checked.

\paragraph{In theory.} 

The Java code has size quadratic in the size of the source Simper program.
More precisely,
  the translation from Simper programs to Turing machines is linear,
  while the translation from Turing machines to Java code is quadratic.
Let us see why.

To simulate increments, decrements, copying and comparisons of values,
  we put, in the Turing machine,
  loops that go over the bits or symbols of a variable or a couple of variables.
Such a loop requires only a constant number of states and transitions.
The most involved construction we had
  was for identifying the tape zone of an array element.
That construction was essentially a sequence of $n$~loops,
  where $n$~is the dimension of the array.
In all cases,
  the part of a Turing machine that simulates a statement~$s$
  has a size proportional to the size of~$s$.
In particular,
  referring to an element of an $n$-dimensional array
    requires $\Theta(n)$ characters in the Simper program,
  and $\Theta(n)$ states and transitions in the Turing machine.
In addition,
  our construction can be done with a transition function such that $|\beta|\le2$
  whenever $\delta(q,b)=(q',\beta,d)$.

From the Turing machine, we generate two pieces of Java code:
  (i)~interface declarations corresponding to \autoref{fig:reduction-extended-delta},
  and (ii)~a builder class as in \autoref{fig:builder}.
By examining these figures,
  we see that the interface declarations will have
    size $\Theta\bigl(|Q|\cdot|\Sigma|\bigr)$,
  and that the builder class will have
    size $\Theta\bigl(|\Sigma|\bigr)$.
Speaking roughly, we can describe this as being quadratic
  in the size of the Turing machine.

\smallskip

Let us now analyze the speed of the simulation.
We start with a Simper program whose runtime is~$t$,
  possibly depending on the size of the input.
We end up with Java code that specifies a subtyping machine,
  whose runtime is~$t'$.
The subtyping machine simulates the Simper program.
The goal in what follows is to put on upper bound on~$t'$, as a function of~$t$.

To find a bound,
  we need to be more precise about how to measure resource usage
  of Simper programs, Turing machines, and subtyping machines.
Let us start with Simper programs.
We consider that each value of type ${\rm nat}$ or ${\rm sym}$ takes $1$~memory cell.
The space taken by an array is obtained by adding the space taken by its elements.
For time, we count each core statement as taking $1$~step,
  with one exception.
The exception is the creation of arrays as a result of using an array literal
  ${\sf array}[v_0,\ldots,v_{n-1}](v')$:
  the time for creating such an array is the space occupied by the array.
With these rules, we see that,
  if the runtime of a Simper program is~$t$, then the space it uses is~$\le t$.
Also, since ${\rm nat}$ values can only be incremented,
  all such values are at most~$O(t)$.

Now let us move on to extended Turing machines.
We define the size of a configuration $(q,\alpha,b,\gamma)$ as
  $\lg|Q| + |\alpha b \gamma|$.
The time taken to execute a Turing machine is the distance in the configuration graph
  from the initial configuration to a final configuration.
We organized the tape into zones, one for each variable in the Simper program,
  plus a few zones for auxiliary variables.
Each type zone for a value of type ${\rm sym}$ has size $O(1)$;
each type zone for a value of type ${\rm nat}$ has size $O(\log t)$;
each type zone for a value of type ${\rm array}$ has size $O(t\log t)$.
All auxiliary variables have type ${\rm nat}$ or ${\rm sym}$,
  and the number of auxiliary variables is at most~$O(p)$,
  where $p$~is the size of the text of the Simper program.
In all, the tape of the Turing machine has size at most $O(pt\log t)$.
What about time?
Each step of a Simper program is simulated by a constant number of Turing transitions.
These transitions may loop over the tape symbols used to represent
  one Simper value.
In the case of ${\rm nat}$ values, this gives an overhead of $O(\log t)$.
For operations such as comparison,
  the head of the Turing machine need to move from one tape zone to another,
  for another overhead of $O(pt\log t)$.
In all, a Simper program that runs in $t$~time steps
  is simulated by an extended Turing machine in $O(pt^2\log^2t)$ time steps.

Now let us move on to subtyping machines.
For these we measure space and time as we do for extended Turing machine:
  space is the size of the tape;
  time is the number of transitions in the configuration graph.
As explained in \autoref{sec:result} (\autoref{fig:configs}),
  there is a direct correspondence between
    configurations of Turing machines and those of subtyping machines.
Thus, the space used by subtyping machines is also $O(pt\log t)$.
For time,
  we note that a subtyping machine head may move back-and-forth once
  in order to simulate one step of the Turing machine.
Thus, the time spent in the subtyping machine is $O(p^2t^3\log^3t)$.

\smallskip

Let us see how the slowdown affects one particular type of Simper programs:
  the CYK parsers we used earlier (\autoref{sec:cyk}).
In that case,
  $t\in O\bigl(|\alpha|^3\cdot |G|^2\bigr)$
  and
  $p\in O\bigl(|G|\bigr)$.
Therefore, the subtyping machine runs in time
  $\tilde O\bigl(|\alpha|^9\cdot |G|^8\bigr)$.
This establishes the polynomial time promised in \autoref{thm:fluent}.
However, the degree of the polynomial is not exactly encouraging.
There is room for improvement,
  and work to be done to achieve a practical parser generator for fluent interfaces.

\paragraph{In practice.} 

Let us consider the example of recognizing the language $\cL_{\rm ambig}$.
With the current implementation,
  the generated Java code is over $1$~GiB,
  and is too big for {\sf javac} to handle in reasonable time.
However,
  if we hand-craft a Turing machine that recognizes $\cL_{\rm ambig}$,
  then the generated Java code is only a few MiB in size.
It is then possible to use {\sf javac} to recognize $\cL_{\rm ambig}$,
  with strings of up to ten letters being handled in minutes.

The main bottleneck now is the compiler from Simper to Turing machines.
To improve its performance, one may extend Turing machines further;
  for example, with a `go to marker' primitive operation.
Or, perhaps one might find a reduction that does not use Turing machines
  as an intermediate representation.


\section{Discussion}\label{sec:discussion} 

Does the reduction from \autoref{sec:result} apply to other languages
  like C${}^\sharp$ and Scala?
No.
Both of them adopted the recursive--expansive restriction of
  \cite{expansive-recursive},
  as recommended by \cite{kennedy2007}.
Roughly,
  this restriction is a syntactic check that succeeds
  if and only if
  our Turing tapes are bounded.
For bounded tapes, the halting problem is decidable.

\medskip

What is the practical relevance of \autoref{thm:main}?
For C, there exists a formally verified type checker in CompCert
  \cite[version~2.5]{compcert}.
For Java,
  \autoref{thm:main} implies that a formally verified type checker
  that guarantees partial correctness cannot also guarantee termination.
In most applications, partial correctness suffices, but not so in
  security critical applications (where users are malicious),
  nor in mission critical applications (where nontermination is costly).
Since there cannot be a totally correct Java type checker,
  \autoref{thm:main}
  strengthens the motivation behind research into alternative type systems,
  such as \cite{f-bounded} and \cite{lightweight-generics}.

It is perhaps difficult to imagine how could the Java type checker
  be the target of a security attack.
However, consider a scenario in which generics are reified,
  as has been discussed for a future version of Java.
In that case,
  one needs to perform subtype checks at runtime to implement {\tt instanceof}.
In other words,
  {\tt instanceof} becomes a potential security vulnerability
  for any Java program that uses it.
A simple fix might be to change the specification of {\tt instanceof}
  to allow it to throw a timeout exception.
Let us hope that a better solution exists.
Similar problems occur if one tries to add gradual typing to Java.

\smallskip
What is the practical relevance of \autoref{thm:fluent}?
On the one hand,
  it hints that it may be possible to implement parser generators
    for fluent interfaces,
  which would make it easier to embed domain specific languages in Java.
On another hand,
  the techniques used in its proof may be reusable
  for encoding other computations in Java's type system.

\medskip

To summarize,
  Theorems \ref{thm:main}~and~\ref{thm:fluent}
  have practical implications:
\begin{enumerate}
\item
A formally verified type checker for Java
  can guarantee partial correctness
  but cannot guarantee termination.
\item
If reified generics are added to Java,
  then one needs to find some solution for not turning {\tt instanceof}
  into a security problem.
Similarly, any technique (such as gradual typing)
  that involves subtype checking at runtime is a potential security problem.
\item
It may be possible to develop parser generators
  that help with embedding domain specific languages in Java.
\end{enumerate}

\smallskip
Still, \autoref{thm:main} is primarily of theoretical interest:
  It strengthens the best lower bound on Java type checking \cite{det-cfg-java},
  and finally answers a question posed almost a decade ago by \cite{kennedy2007}.

\section{Related Work} 

Java's type system is not only undecidable, but also unsound \cite{javats-unsound}.
Many other languages have undecidable type systems:
  Haskell with extensions \cite{haskell-ts-universal,haskell-funcdeps},
  OCaml \cite{lillibridge-phd,ocaml-undecidable},
  C++ \cite{c++-ts-universal},
  and Scala \cite{scala-ts-universal,scala-unsound-fix},
  to name a few.
It is often the case that undecidable type systems
  are a fun playground for metaprogramming.
For example,
  in Haskell, undecidable instances are useful for deriving type classes
    \cite{replib};
  and, in C++, libraries like Boost.Hana \cite{boost-hana}
    allow one to easily write (mostly) functional programs that run at compile time.
Will Java's type system become a playground for metaprogramming?
\cite{det-cfg-java} and \autoref{thm:fluent} show that it is possible in principle.
It remains to be seen whether it is possible in practice.

On the theory side,
  perhaps the closest result
  is that subtype checking is undecidable in~$F_{\subtype}$
  \cite{f-sub-undecidable}.
That proof has some high level similarities with the one presented here:
  it considers a deterministic type system,
  and shows a reduction from (a kind of) Turing machines,
  by using an intermediate formalism.
We also know that System~$F$ is undecidable \cite{system-f-undecidable}.
But, its subset known as Hindley--Milner is decidable,
  in fact type inference is {\sc DExpTime}-complete
  \cite{hm-dexptime-complete}.

Why should we care about toy type systems like $F_{\subtype}$ and that of System~$F$?
Because real type systems like Java's are too difficult for humans to reason about.
There are three ways in which one can settle the decidability of type checking
  for a real world language~X:
(i)~find a simple type system that is less expressive than that of~X,
  and show it is undecidable;
(ii)~find a simple type system that is more expressive than that of~X,
  and show that it is decidable; or
(iii)~tackle the original type system but do a mechanically verified proof.
For~(iii), a huge effort is required.
This paper falls in category~(i).
The simple type system is neither $F_{\subtype}$ nor System~$F$.
Instead, it is taken from \cite{kennedy2007}.
Kennedy and Pierce study several variations of the type system they propose.
Alas, those variations they prove decidable are less expressive than Java,
  and those variations they prove undecidable are not less expressive than Java.
The variant they prove undecidable allows classes of arity~$2$
  and multiple instantiation inheritance,
  which in our setting translates to having nondeterministic subtyping machines.
Their reduction is from PCP\null.
They conjecture that multiple instantiation inheritance
  is essential for undecidability.
That turns out not to be the case.

\section{Conclusion} 

It is possible to coerce Java's type checker into performing \emph{any} computation.
This opens possibilities for use and abuse.

\acks 
The problem was brought to my attention by \cite{cstheory-surprising},
  and it started to look very interesting after I read the result of
    \cite{det-cfg-java}.
Stefan Kiefer, Rasmus Lerchedahl Petersen, Joshua Moerman, Ross Tate,
  and Brian Goetz
  provided feedback on earlier drafts.
Nada Amin answered my questions about Scala.
Hongseok Yang encouraged me to write a proper paper.

\bibliographystyle{abbrvnat}

\raggedright
\bibliography{javats}

\end{document}